\documentclass[12pt]{article}

\usepackage[margin=1in]{geometry}
\usepackage{amsmath,amssymb,amsfonts,amsthm,mathtools,bm}
\usepackage{booktabs}
\usepackage{enumitem}
\usepackage{algorithm}
\usepackage{comment,color,xcolor}
\usepackage{algpseudocode}
\usepackage[round]{natbib}
\usepackage[colorlinks=true,citecolor=blue,linkcolor=blue,urlcolor=blue]{hyperref}

\newtheorem{theorem}{Theorem}

\newcommand{\RR}{\mathbb{R}}

\newcommand{\EE}{\mathbb{E}}
\newcommand{\PP}{\mathbb{P}}

\newcommand{\dd}{\mathrm{d}}
\newcommand{\eqd}{\overset{  \mathcal{L}  }{=}}
\newcommand{\ind}{\mathbf{1}}
\newcommand{\calA}{\mathcal{A}}
\newcommand{\calB}{\mathcal{B}}
\newcommand{\calC}{\mathcal{C}}
\newcommand{\calE}{\mathcal{E}}
\newcommand{\calF}{\mathcal{F}}
\newcommand{\calG}{\mathcal{G}}

\newcommand{\calL}{\mathcal{L}}
\newcommand{\calS}{\mathcal{S}}
\newcommand{\calT}{\mathcal{T}}

\newcommand{\sC}{\mathsf{C}}
\newcommand{\sE}{\mathsf{E}}
\newcommand{\sN}{\mathsf{N}}
\newcommand{\sU}{\mathsf{U}}
\newcommand{\sZ}{\mathsf{Z}}

\newcommand{\bbW}{\mathbb{W}}
\newcommand{\bbQ}{\mathbb{Q}}
\newcommand{\bbZ}{\mathbb{Z}}

\newcommand{\BridgeMax}{\textsc{BridgeMax}}
\newcommand{\RBMMax}{\textsc{RBMMax}}
\newcommand{\ICB}{\textsc{ICB}}

\title{Exact Simulation of Diffusions via Brownian Bridge Range Reconstruction}
\author{Radu Herbei \and Kumar Somnath}
\date{\today}

\begin{document}
\setcounter{algorithm}{-1}
\maketitle

\begin{abstract}
We develop an exact simulation algorithm for scalar diffusion paths and diffusion bridges when the Poisson potential is unbounded in both tails. The method reconstructs the realized range of a Brownian bridge proposal by sampling its maximum and location, together with the maxima and locations of the two adjacent restricted Brownian meanders. Conditional on this finite information, the remaining path decomposes into four conditionally independent interval-constrained Brownian bridges, which can be sampled exactly at the Poisson times required by the rejection test. In contrast to constructions based on an enclosing range layer, the
proposed representation retains the exact extrema and their locations. Our algorithm returns an exact finite-dimensional skeleton without time-discretization error and permits exact post-acceptance refinement at arbitrary finite collections of times. Numerical experiments validate the resulting finite-dimensional laws and identify the restricted-meander extremum simulation as the principal computational cost in the nonlinear example.
\end{abstract}

\noindent\textbf{Keywords:} Brownian bridge; diffusion process; exact algorithm; exact simulation; Poisson thinning; restricted Brownian meander.\\
\textbf{MSC 2020:} Primary 60J60; secondary 60J65, 65C05.


\section{Introduction}
\label{sec:intro}

Diffusion processes provide a fundamental class of continuous-time stochastic
models in applied probability, statistics, finance, biology, and the physical
sciences. Except in a small number of analytically tractable cases, however,
their transition densities and the distributions of path-dependent
functionals are not available in closed form. Simulation is therefore
typically based on a time-discretization scheme, such as the Euler--Maruyama
method or one of its higher-order variants. These methods converge as the
mesh is refined under appropriate regularity conditions, but every fixed
discretization introduces an approximation error that is distinct from the
Monte Carlo error and may be difficult to quantify in nonlinear models or for
functionals depending on the behavior of the path between grid points
\citep{KloedenPlaten2013}. Exact simulation addresses this issue by producing
a finite-dimensional random representation having precisely the law induced
by the stochastic differential equation (SDE), without introducing
time-discretization error. Such representations are useful both for direct
Monte Carlo calculation and as building blocks for likelihood-based
inference, diffusion-bridge simulation, and data-augmentation methods
\citep{BeskosRoberts2005,BeskosPapaspiliopoulosRobertsFearnhead2006}.

The principal general framework for exact simulation of scalar diffusions is
the Exact Algorithm (EA) methodology initiated by \citet{BeskosRoberts2005} and
developed further by
\citet{BeskosPapaspiliopoulosRoberts2006, BeskosPapaspiliopoulosRoberts2008}, all variants of a rejection sampling strategy. 
Consider a unit-diffusion process (after a Lamperti transformation, when necessary), defined using the following SDE
\[
    dX_t=\alpha(X_t)\,dt+dW_t,\quad 0\le t\le T.
\]
Girsanov's theorem and It\^o's formula then express the target diffusion law
relative to an endpoint-biased Brownian proposal law in terms of the
acceptance functional
\[
    a(B)
    =
    \exp\left\{
        -\int_0^T \phi(B_t)\,dt
    \right\},
\]
where $\phi$ is a non-negative function determined by the drift, and $B$ is the proposed path, both carefully defined later. The ideal rejection sampling algorithm requires simulating the Brownian path $B$ (entirely) and a coin  $\sC\sim {\rm Bernoulli}(a(B))$. Both steps are technically impossible: $B$ is a continuous path and exact evaluation of $a(B)$ is not possible. These issues are circumvented by (1) only revealing finite-dimensional information about $B$ \emph{as necessary}, and (2) observing that simulating $\sC$ is equivalent to simulating a carefully defined unit-rate Poisson process, and counting how many points fall below the graph of $t\mapsto \phi(B_t)$, acceptance being equivalent to ``no Poisson points fall below the graph''. This is a computationally tractable problem and can be done exactly.

The successive versions of the Exact Algorithm differ in how $B$ is revealed and how the domain of the unit-rate Poisson process is defined. The algorithm of \cite{BeskosRoberts2005}, henceforth called EA1 assumes that $\phi$ is bounded globally, so a deterministic Poisson rectangle and ordinary Brownian-bridge interpolation are sufficient \citep{BeskosRoberts2005}. The subsequent algorithmic improvements described in \cite{BeskosPapaspiliopoulosRoberts2006,BeskosPapaspiliopoulosRoberts2008}, henceforth named EA2 and EA3 respectively, relax the boundedness of $\phi$ assumption, and replace it with a one-sided tail control paired with a path extrema simulation (EA2) or a bounding layer simulation and no tail control (EA3). A brief description of these strategies is given later.

Related developments have pursued several complementary ways of obtaining
finite information about an otherwise infinite-dimensional path. The
$\varepsilon$-strong construction of
\citet{BeskosPeluchettiRoberts2012} generates successively tighter upper and
lower path envelopes. \citet{PollockEtAl2016} developed unbounded and adaptive
exact algorithms based on refined layer information and placed exact path
restoration within a general finite-dimensional framework for diffusions and
jump diffusions. \citet{ChenHuang2013} instead used sequential spatial
localization, decomposing a Brownian-driven diffusion into pieces confined
to bounded neighborhoods. Although these constructions differ
substantially, each must solve the same underlying problem: identify finite
random information that both bounds the Poisson potential along the entire
proposal and leaves a conditional path law from which the values required by
the rejection test can be sampled exactly.

This paper gives a direct solution to this problem for the fixed-horizon
scalar EA3 setting. Rather than sampling an interval known only to contain
the Brownian proposal, we reconstruct its realized minimum and maximum
exactly. The construction may be viewed as extending the extremum
decomposition underlying EA2 by one additional step. We use a sequential approach: for a proposed Brownian bridge $B$, we first simulate its maximum $\overline B$ and its location $\tau^+$. The path fragments on the two sides of $\tau^+$,
after the appropriate reflection and time reversal, are conditionally
independent restricted Brownian meanders \citep{Williams1974,PitmanYor1996}. We then simulate the maximum and its location for each of these two depth processes. These secondary extrema can be used then to recover the global minimum \(\underline B\) and its location $\tau^-$. The complete realized range $[\underline B,\overline B]$ is therefore known without simulating the full Brownian path.

The construction becomes computationally explicit through two recently
developed exact Brownian sampling primitives. The joint maximum and
maximum-location sampler for restricted Brownian meanders is provided by
\citet{SomnathHerbei2026MaxLocation}. Conditional on the endpoint, the extrema, and
their locations, the unrevealed path $B$ separates into four conditionally
independent Brownian bridges constrained to remain within explicitly known
intervals. Their values at arbitrary finite collections of times can be
generated using the exact interval-constrained interpolation method of
\citet{HerbeiSomnath2026Interval}. Consequently, the exact range supplies a finite
bound for the proposal path, while the four-piece factorization
supplies exactly the conditional values needed at the Poisson times. This
yields a genuine rejection sampler from the diffusion law.

A closely related decomposition appears in \citet{ChenHuang2012}, who
decompose a Brownian proposal at its maximum and use Brownian meanders to
construct unbiased importance-sampling estimators for functionals involving
diffusion extrema. When both extrema are required, the opposite extremum is
recovered through the maxima of the two meander segments. Their construction
produces an importance-sampling weight and is directed toward unbiased
estimation.
The present work additionally simulates the locations of the secondary
extrema. Those locations divide the proposal into the four
interval-constrained bridge pieces required for exact Poisson thinning and
for subsequent conditional refinement. Thus, although the underlying
maximum decomposition is related, the probabilistic output and the role of
the decomposition are different.

The contribution of this paper is therefore an exact simulation algorithm for a large class of diffusions, based on exact reconstruction of the range of each delivered path. First, we provide a constructive path factorization that reconstructs the exact range of a Brownian bridge from
one Brownian-bridge extremum and two restricted-meander extrema. Second, we
combine this factorization with interval-constrained interpolation to obtain
an exact algorithm for both unconditioned diffusions and diffusion bridges,
and prove that its accepted output permits exact post-acceptance refinement.
Third, we give complete implementations for an Ornstein--Uhlenbeck diffusion
and a nonlinear double-well diffusion. In both examples the Poisson
potential is unbounded in both directions, so neither EA1 nor either
one-sided version of EA2 applies. The Ornstein--Uhlenbeck model provides
closed-form transition and acceptance probabilities against which the
implementation can be checked. The double-well model illustrates the method
when no elementary transition density is available and also exhibits the
discretization distortion that may arise from coarse Euler--Maruyama
simulation. The timing experiments document the computational allocation of
the method and identify the range-reconstruction and constrained 
interpolation steps that govern its practical cost.

The remainder of the paper is organized as follows. Section~\ref{sec:prelim} introduces the
path-space formulation, reviews the Wiener--Poisson factorization and the
EA1--EA3 constructions, and describes the Brownian sampling primitives used
throughout. Section~\ref{sec:mainsampler} develops the exact range reconstruction, identifies
the four conditional bridge components, states the complete algorithm, and
proves its correctness and refinement properties. Section~\ref{sec:sims} gives the
Ornstein--Uhlenbeck and double-well implementations and the accompanying
numerical studies. Section~\ref{sec:disc} concludes with a discussion of the scope,
computational characteristics, and possible extensions of the method. The
Appendix verifies the standing assumptions for the double-well example.

\section{Preliminaries}
\label{sec:prelim}

Fix a finite horizon \(T\in(0,\infty)\) and an initial value \(x\in\RR\). Let
\(
        (\Omega,\calA,\{\calA_t\}_{0\le t\le T}, \PP)
\)
be a complete filtered probability space. The time $T$ will be fixed from now on and we suppress the dependence on $T$ to avoid cumbersome notation.
A point of this space is denoted by \(\xi\in\Omega\).  We assume that \((\Omega,\calA,\PP)\) is rich enough to support all primitive random objects required by the algorithms developed below, including countably many independent copies of them.  We write \(\eqd\) to denote equality in distribution. We use \(\sU_1,\sU_2,\ldots\) to denote iid Uniform\((0,1)\) random variables, \(\sE_1,\sE_2,\ldots\) for iid Exponential\((1)\) random variables, and  \(\sN_1,\sN_2,\ldots\) for iid standard normal random variables. These sequences are independent of each other.

On the same space let \(W=\{W_t:0\leq t\leq T\}\) be an $\{\calA_t\}$-standard Brownian motion, independent of the auxiliary variables just described. For \(x\in\RR\), write
\(
        W_t^x=x+W_t,
\)
so that \(W^x\) is Brownian motion started at \(x\). We consider an adapted continuous process $X$ defined via the following one-dimensional unit diffusion model: 
\begin{equation}
\label{eq:sde}
        \dd X_t=\alpha(X_t)\,\dd t+\dd W_t,
        \qquad 0\leq t\leq T,
        \qquad X_0=x\in\RR,
\end{equation}
where \(\alpha:\RR\to\RR\) is the drift coefficient. We assume that \eqref{eq:sde} has a global, weakly unique solution with continuous sample paths.  Standard references for Brownian motion, weak solutions, and diffusion laws include \citet{KaratzasShreve1998} and \citet{Oksendal2013}.
For a scalar diffusion with non-unit diffusion coefficient,
\[
        \dd V_t=b(V_t)\,\dd t+\sigma(V_t)\,\dd W_t,
\]
with \(\sigma>0\) sufficiently smooth, the Lamperti transform \citep{Lamperti1964}
\[
        X_t=\eta(V_t),\qquad \eta(v)=\int^v\frac{1}{\sigma(r)}\,\dd r,
\]
reduces the model to \eqref{eq:sde}, with
\begin{equation*}
        \alpha(u)=\frac{b(\eta^{-1}(u))}{\sigma(\eta^{-1}(u))}
        -\frac12\sigma'(\eta^{-1}(u)).
\end{equation*}
We consider scalar diffusions for which the Lamperti transformation is a $C^2$ bijection from the original space onto $\RR$. Under these conditions, an exact skeleton for \(X\) is transformed back by \(V_t=\eta^{-1}(X_t)\).  The remainder of the paper is therefore written for \eqref{eq:sde} without loss of generality for scalar diffusions satisfying the Lamperti conditions.

Our stated goal is to simulate an \emph{exact} discrete trajectory (or path) of the diffusion \eqref{eq:sde}. This is known to be a challenging task, and we will clarify this statement below, once the notation and setup are complete. The route to this goal is rejection sampling on the path space, which we describe next. Let
\(
        \calC=C([0,T],\RR)
\)
be the space of continuous real-valued functions on \([0,T]\), endowed with the supremum norm \(\|\omega\|_\infty=\sup_{0\leq t\leq T}|\omega(t)|\). An element \(\omega\in\calC\) is a deterministic continuous path. We write either \(\omega(t)\) or \(\omega_t\) for its value at time \(t\). The canonical coordinate maps are
\[
        B_t:\calC\to\RR,
        \qquad B_t(\omega)=\omega(t),
        \qquad 0\leq t\leq T,
\]
and the canonical \(\sigma\)-field and filtration are
\[
        \calF=\calB(\calC) = \sigma\{B_t:0\leq t\leq T\},
        \qquad
        \calF_t=\sigma\{B_s:0\leq s\leq t\},\quad 0\leq t\leq T.
\]
If \(\nu\) is a probability measure on \((\calC,\calF)\), the coordinate collection \(B=\{B_t:0\leq t\leq T\}\) is a stochastic process on \((\calC,\calF,\nu)\). All path functionals used below, including endpoints, extrema, and hitting times, are measurable functions on this canonical space. For a canonical process $B$ and a compact interval $J\subseteq[0,T]$ define the minimum, maximum and their locations as
\begin{align}
\underline{B}_J
    &:= \inf_{u\in J} B_u,
&
\overline{B}_J
    &:= \sup_{u\in J} B_u,
    \label{eq:minmax1}
\\
\tau^{-}_J
    &:= \inf\bigl\{u \in J : B_u = \underline{B}_J\bigr\},
&
\tau^{+}_J
    &:= \inf\bigl\{u \in J : B_u = \overline{B}_J\bigr\}.
    \label{eq:minmax2}
\end{align}
When $J = [0,T]$ we suppress the interval and simply write $\underline B$, $\overline B$, $\tau^-$, and $\tau^+$. 

If \(Y=\{Y_t:0\leq t\leq T\}\) is any continuous process on $(\Omega, \calA, \PP)$, we also write \(Y\) for the measurable map into \(\calC\), thus $Y(\xi):[0,T]\to \RR$ is defined as \(Y(\xi)(t)=Y_t(\xi)\). Its law is the push-forward measure \(\calL(Y)=\PP \circ Y^{-1}\). In particular, the Wiener measure started at \(x\) is
\(
        \bbW^x=\calL(W^x) = \PP\circ (W^x)^{-1}.
\)
Under \(\bbW^x\), the canonical process satisfies \(B_0=x\), and \(B_t-B_s\sim N(0,t-s)\), independently of \(\sigma\{B_r:0\leq r\leq s\}\), for \(0\leq s<t\leq T\). For \(z\in\RR\), \(\bbW^{x\to z}\) denotes the Brownian bridge measure, as induced by the process 
\begin{equation*}
x +\frac{t}{T}(z-x) + W_t - \frac{t}{T}W_T\ ,\quad 0\le t\le T\ .
\end{equation*}
The diffusion law induced by \eqref{eq:sde} is
\(
        \bbQ^x=\calL(X) = \PP \circ X^{-1},
\)
so that under \(\bbQ^x\), \(B\) is the canonical version of the diffusion. 

We aim to generate a finite-dimensional object $(\calS, \Lambda)$ defined such that:
\begin{enumerate}
\item $\cal S$ is a  \emph{time-value skeleton}
\begin{equation}
        \mathcal S=(N, \{ (t_i,X_i):1\leq i\leq N\})
        \qquad 0\leq t_1<\cdots<t_N\leq T,
        \label{eq:skeleton}
\end{equation}
where \(N, \{(t_i,X_i), 1\le i\le N\}\) may be random;
\item $\Lambda$ is a finite-dimensional \emph{auxiliary component}, consisting of a finite collection of path functionals under the target law.
\item There exists a path $B\sim \bbQ^x$ such that $X_i = B_{t_i}$ (i.e., $\calS$ is a skeleton of $B$), and the conditional law of B given $(\calS, \Lambda)$ permits exact refinement of $\calS$ at any additional finite collection of deterministic times.
\end{enumerate}

 The tool used to accomplish this is a variant of the well-known rejection sampling algorithm, whose output will be a diffusion path revealed only through a skeleton of the form \eqref{eq:skeleton}.
Once a path is accepted and its skeleton is revealed, it must carry enough conditional information to allow further refining of the accepted path at any additional finite set of times, from the correct regular conditional law. Throughout, exact means \emph{distributionally exact} under ideal random-number generation and arithmetic: no time-discretization error is introduced. Practical implementations remain subject to pseudo-random-number generation and finite-precision arithmetic. This is an important issue, as it is well-known that the finite-dimensional distributions induced by \eqref{eq:sde} are generally unavailable in closed form. Conventional simulation of these finite-dimensional distributions is usually based on time-discretization approximations \citep{KloedenPlaten2013}.

Rejection sampling on the path space can be described as follows. If $\mu, \nu$ are probability measures on $(\calC, \calF)$, and $\mu\ll\nu$, and if $a:\calC\to[0,1]$ is proportional to \(\dd\mu/\dd\nu\), then a proposal \(Y\sim\nu\) accepted with conditional probability \(a(Y)\) has law \(\mu\). As described, this algorithm is essentially impossible to implement since $Y$ is of infinite dimension. Nevertheless, under certain assumptions, exact sampling of a discrete skeleton \eqref{eq:skeleton} is possible. As mentioned in Section~\ref{sec:intro}, we follow the EA literature \citep{BeskosRoberts2005,BeskosPapaspiliopoulosRoberts2006,BeskosPapaspiliopoulosRoberts2008}. A brief description of the setup and conditions of these algorithms follows next.

Throughout the paper we impose the following standing assumptions, which are commonly used in the EA papers:
\begin{enumerate}[label=(A\arabic*),ref=A\arabic*,leftmargin=2.2em,start=0]
\item \label{assump:rn} \(\bbQ^x\ll\bbW^x\) with Radon-Nikodym derivative
\[
\frac{\dd \bbQ^x}{\dd \bbW^x}= 
\exp\left\{
\int_0^T\alpha(B_t)\,\dd B_t
        -\frac12\int_0^T\alpha^2(B_t)\,\dd t
\right\}\ .
\]
\item \label{assump:diff} \(\alpha\) is continuously differentiable.
\item \label{assump:endpoint} With \(A(u)=\int_0^u\alpha(v)\,\dd v\),
\[
        c(x,T)=\int_\RR\exp\left\{A(u)-\frac{(u-x)^2}{2T}\right\}\,\dd u<\infty .
\]
\item \label{assump:lower} The function \(\alpha^2+\alpha'\) is bounded below.
\end{enumerate}
A sufficient condition for Assumption \eqref{assump:rn} is Novikov's condition
\begin{equation*}
\displaystyle \EE_{\bbW^x}\left[\exp\left\{\frac12\int_0^T\alpha^2(B_t)\,\dd t\right\}\right]<\infty\ .
\end{equation*}
Assumption \eqref{assump:diff} allows the stochastic integral in the Girsanov density to be rewritten by Ito's formula. Assumption \eqref{assump:endpoint} makes the endpoint-biased Brownian proposal normalizable, and Assumption \eqref{assump:lower} permits a non-negative potential in a Poisson rejection step, as we explain next.

Based on \eqref{assump:lower}, choose \(k_L\in\RR\) such that
\[
        k_L\leq \frac12\{\alpha^2(u)+\alpha'(u)\},\qquad u\in\RR,
\]
and define
\begin{equation}
\label{eq:phi}
        \phi(u)=\frac12\{\alpha^2(u)+\alpha'(u)\}-k_L\ge 0,
        \qquad u\in\RR
\end{equation}
 Using assumption \eqref{assump:endpoint}, define the endpoint density
\begin{equation}
\label{eq:h_endpoint}
        h(u)=\frac{1}{c(x,T)}\exp\left\{A(u)-\frac{(u-x)^2}{2T}\right\},
        \qquad u\in\RR
\end{equation}
and the biased Brownian proposal law as the probability measure \(\bbZ^x\) on \((\calC,\calF)\) given by
\begin{equation}
\label{eq:Z_def}
        \bbZ^x(F)=\int_\RR\bbW^{x\to u}(F)h(u)\,\dd u,
        \qquad F\in\calF.
\end{equation}
Equivalently, to sample \(B\sim\bbZ^x\), one first samples \(Z\sim h\) and then, conditionally on \(Z=z\), samples a Brownian bridge from \(x\) to \(z\) over \([0,T]\).

\subsection{Diffusion law and endpoint-biased Brownian proposal}
\label{sec:endpoint-proposal}

The stochastic integral in Assumption \eqref{assump:rn} is interpreted under \(\bbW^x\). By It\^o's formula and Assumption \eqref{assump:diff},
\[
        \int_0^T\alpha(B_t)\,\dd B_t
        =A(B_T)-A(x)-\frac12\int_0^T\alpha'(B_t)\,\dd t,
\]
and hence
\begin{equation}
\label{eq:rnqw}
        \frac{\dd \bbQ^x}{\dd \bbW^x}
        =\exp\left\{A(B_T)-A(x)-\frac12\int_0^T\{\alpha^2(B_t)+\alpha'(B_t)\}\,\dd t\right\}.
\end{equation}
The role of \(\bbZ^x\) is to remove the endpoint factor \(\exp\{A(B_T)\}\). From \eqref{eq:Z_def} and the Gaussian endpoint density of Brownian motion under \(\bbW^x\),
\begin{equation}
\label{eq:rnzw}
        \frac{\dd \bbZ^x}{\dd \bbW^x}
        =\frac{h(B_T)}{(2\pi T)^{-1/2}\exp\{-(B_T-x)^2/(2T)\}}
        =\frac{\sqrt{2\pi T}}{c(x,T)}\exp\{A(B_T)\}.
\end{equation}
Combining \eqref{eq:rnqw} and \eqref{eq:rnzw}, and using the definition of \(\phi\) in \eqref{eq:phi}, all dependence on \(B_T\) cancels. Consequently,
\begin{equation}
\label{eq:rnqz}
        \frac{\dd \bbQ^x}{\dd \bbZ^x}
        =\frac{\exp\left\{-\int_0^T \phi(B_t)\,\dd t\right\}}
        {\EE_{\bbZ^x}\left[\exp\left\{-\int_0^T \phi(B_t)\,\dd t\right\}\right]}.
\end{equation}
The denominator in \eqref{eq:rnqz} is an unknown normalizing constant and is irrelevant for rejection sampling. 
Define
\begin{equation}
\label{eq:acceptprob}
a(\omega)
=
\exp\left\{-\int_0^T \phi(B_t(\omega))\,dt\right\}.
\end{equation}
Then \eqref{eq:rnqz} can be written as
\[
\frac{d\mathbb Q^x}{d\mathbb Z^x}(\omega)
=
\frac{a(\omega)}
{\mathbb E_{\mathbb Z^x}[a(B)]}.
\] 
Equation \eqref{eq:rnqz} is the fundamental component of the rejection sampling algorithm used by the Exact Algorithms and subsequently in this paper. Since $\phi\ge 0$, in its simplest version, the structure we will use is:
\begin{algorithm}[!ht]
\caption{Rejection sampling on the path space}
\label{alg:path_rejection}
\begin{algorithmic}[1]
\State Draw $B\sim \bbZ^x$;
\State Accept with probability
\begin{equation*}
a(B) = \exp\left\{
-\int_0^T\phi(B_t)\ \dd t 
\right\}\in [0,1]\ .
\end{equation*}
\end{algorithmic}
\end{algorithm}

By \eqref{eq:rnqz}, Algorithm~\ref{alg:path_rejection} produces a path $B\sim \bbQ^x$. However, as discussed, neither of the two steps above can be done exactly in practice. Instead, we will replace Step 1 with the exact simulation of a finite-dimensional object, whose law is consistent with $\bbZ^x$, and instead of evaluating $a(B)$, we simulate a finite-dimensional event having the exact same probability of success. The necessary pieces for this procedure will be added later.

For $z\in \RR$ the diffusion bridge law $\bbQ^{x\to z}$ is defined as 
\begin{equation*}
    \bbQ^{x\to z}(d\omega) = 
    \frac{a(\omega)}{\EE_{\bbW^{x\to z}}[a(B)]}
     \bbW^{x\to z}(d\omega)
\end{equation*}
and it then satisfies
\[
\frac{d\mathbb Q^{x\to z}}{d\mathbb W^{x\to z}}(\omega)
=
\frac{a(\omega)}
{\mathbb E_{\mathbb W^{x\to z}}[a(B)]}.
\]

The law $\bbQ^{x\to z}$ agrees with a regular conditional version of $\bbQ^x(\cdot\mid B_T=z)$.
Thus the unconditional and bridge algorithms use the same pathwise
acceptance probability; they differ only in whether the endpoint is
sampled from $h$ or fixed at $z$.

\subsection{Poisson thinning and the finite-dimensional acceptance event}
We first explain how Step 2 in Algorithm~\ref{alg:path_rejection} is accomplished. The EA algorithms convert  this accept-reject step into a computable test by Poisson thinning. For now we imagine that the proposed path $B$ is available. Suppose that, conditionally on  \(B=\omega\), a finite bound
\begin{equation*}
        M_\phi(\omega)\geq \sup_{0\leq t\leq T}\phi(\omega_t)
\end{equation*}
can be computed. Simulate a homogeneous Poisson process \(\Phi\) of unit intensity on the rectangle \([0,T]\times[0,M_\phi(\omega)]\). Write
\[
        \Phi=\{(S_i,V_i):1\leq i\leq K\}.
\]
The event that no Poisson point falls below the graph \(t\mapsto \phi(\omega_t)\) is
\begin{equation*}
        \calE(\omega,\Phi)=\bigcap_{i=1}^K\{V_i\ge\phi(\omega_{S_i})\}.
\end{equation*}
In practice, if $M_\phi(\omega)>0$, generate
\[
K\mid B=\omega
\sim
\operatorname{Poisson}\{T M_\phi(\omega)\},
\]
and, conditional on $K$, generate $\{(S_i, V_i)\ :\ 1\le i\le K\}$ independent points uniformly over
$[0,T]\times[0,M_\phi(\omega)]$. Accept the path $B$ if none of these points lies
below the graph of $t\mapsto\phi(B_t(\omega))$. Indeed, evaluating
\[
\Sigma=\sum_{i=1}^K \ind\{V_i<\phi(B_{S_i})\}
\qquad\Rightarrow\qquad
\Sigma\mid B = \omega \sim {\rm Poisson}\left(\int_0^T \phi(B_t(\omega)) \dd t\right)
\]
and thus
\[
\Pr(\operatorname{accept}\mid B=\omega) = \Pr(\Sigma= 0\mid B = \omega) =
\exp\left\{-\int_0^T\phi(B_t(\omega) )\dd t\right \}.
\]

If $M_\phi(\omega)=0$, then the nonnegativity of $\phi$ and the defining
upper-bound property imply that
\[
\phi(B_t(\omega))=0,
\qquad 0\leq t\leq T.
\]
Consequently, $ \int_0^T \phi(B_t(\omega)) \dd t =0$ and
\(
K\mid B = \omega\sim\operatorname{Poisson}(0),
\)
so that $K=0$ almost surely and the proposal is accepted automatically.
Thus the identity
\[
\Pr(\operatorname{accept}\mid B = \omega)
=
\exp\left\{
-\int_0^T\phi(B_t(\omega))\,dt
\right\}
\]
also holds when $M_\phi(\omega)=0$.

This construction illustrates that the accept/reject event can be simulated exactly even without full knowledge of the entire proposed path $\omega$. Instead, it is sufficient to know the following elements (in order):
\begin{enumerate}
\item the bound $M_\phi(\omega)$;
\item the Poisson count $K\sim {\rm Poisson}\{TM_\phi(\omega)\}$;
\item the values $\phi(\omega_{S_i})$ for iid ${\rm Uniform}(0,T)$ variates $S_i, i=1, \ldots, K$;
\end{enumerate}

As such, instead of knowing the entire path $\omega$, it is sufficient to ``reveal'' only certain aspects of it, so that the elements above can be computed exactly. This observation also addresses the issue with Step 1 of Algorithm~\ref{alg:path_rejection}: it is not necessary to sample the entire $B$, but just enough of it to permit evaluation of $M_\phi(\omega)$ and $\phi(\omega_{S_i})$.
The Exact Algorithms are all instances of this same rejection principle and strategy. Since simulating $K$ is trivial, the algorithms differ only in how much information about the proposed Brownian path is revealed in order to construct the height \(M_\phi(\omega)\) of the Poisson rectangle and to interpolate the path at the Poisson times.

\textbf{EA1}: In the simplest case, a global boundedness condition is assumed:
\begin{equation*}
\tag{A4}
\label{eq:A4}
        \exists M<\infty\quad\text{such that}\quad \phi(u)\leq M,\qquad u\in\RR
\end{equation*}
Now the Poisson rectangle is deterministic: \([0,T]\times[0,M]\). The EA1 algorithm is consequently very simple. Draw \(Z\sim h\), construct the Brownian bridge proposal from \(x\) to \(Z\), simulate \(\Phi\) on \([0,T]\times[0,M]\), sample the bridge at the Poisson times using ordinary Brownian bridge interpolation, and accept if no point lies below \(\phi\). Conditional on the proposed path, the acceptance probability is exactly \eqref{eq:acceptprob}; hence the accepted proposal has law \(\bbQ^x\) by \eqref{eq:rnqz}. The bridge version replaces \(Z\sim h\) by a fixed endpoint \(Z=z\) and uses  the same acceptance probability \eqref{eq:acceptprob}. EA1 is the cleanest version of the method, but condition \eqref{eq:A4} is restrictive. If \(\phi(u)\to\infty\) in either tail, a deterministic rectangle cannot be used. Many statistically important diffusions fail \eqref{eq:A4}; for instance, for the Ornstein--Uhlenbeck drift \(\alpha(u)=\theta_1-\theta_2u\), one obtains \(\phi(u)\) proportional to a quadratic function after shifting by \(k_L\).

\textbf{EA2:} Relaxing \eqref{eq:A4}, now assume that $\phi$ is bounded in one tail, say
\begin{equation*}
\tag{A5}
\label{eq:A5}
        \limsup_{u\to\infty}\phi(u)<\infty.
\end{equation*}
The symmetric version assumes \(\limsup_{u\to -\infty}\phi(u)<\infty\). Suppose \eqref{eq:A5} holds and for the Brownian bridge proposal $B\sim \bbW^{x\to Z}$ consider the minimum and its location $(\underline B, \tau^-)$ as in \eqref{eq:minmax1} and \eqref{eq:minmax2}.

If \((\underline B,\tau^-)\) are available, the proposed path is known to lie in \([\underline B,\infty)\) a.s. Under \eqref{eq:A5} and local boundedness of \(\phi\),
\begin{equation*}
        M_\phi=\sup_{u\geq \underline B}\phi(u)<\infty\quad a.s. 
\end{equation*}
is a valid path-dependent Poisson height. 

Conditional on \((\underline B,
\tau^-)\) and the endpoint $Z = z$,  the two processes
\begin{align*}
        R_1(s)&=B_{\tau^--s}-\underline B,\qquad 0\leq s\leq \tau^-,\quad R_1(0) = 0, R_1(\tau^-) = x-\underline{B}\\
        R_2(s)&=B_{\tau^-+s}-\underline B,
        \qquad 0\leq s\leq T-\tau^-,\quad R_2(0) = 0, R_2(T-\tau^-) = z-\underline{B}\ ,
\end{align*}
are independent three-dimensional Bessel bridges, or equivalently restricted Brownian meanders, with terminal values \(x-\underline B\) and \(z-\underline B\). The Brownian proposal can therefore be filled at the Poisson times by Bessel-bridge interpolation. This is the retrospective part of EA2: only those values needed for the Poisson test are generated.

If instead \(\limsup_{u\to-\infty}\phi(u)<\infty\), the construction is reflected. One simulates the maximum \(\overline B\) and its location, obtains a finite bound \(\sup_{u\leq \overline B}\phi(u)\), and fills the decomposed pieces below the maximum by reflected Bessel-bridge interpolation. EA2 is therefore an extremum-based exact algorithm: one extremum is enough because one tail of \(\phi\) is already controlled.

\textbf{EA3:}  Now remove the global boundedness condition required by EA1 and the
one-sided boundedness condition required by EA2. Conditional on the
proposal endpoint \(Z=z\), let
\(
        B\sim\mathbb W^{x\to z}
\)
be a Brownian bridge proposal from \(x\) to \(z\) over \([0,T]\), and consider the two extrema $(\underline B, \overline B)$ as in \eqref{eq:minmax1}, \eqref{eq:minmax2}. If \(\phi\) is unbounded in both tails, knowledge of only \(\underline B\) or only \(\overline B\) does not provide a finite upper bound for \(t\mapsto\phi(B_t)\). The original EA3 construction of \citet{BeskosPapaspiliopoulosRoberts2008} instead reveals a random compact interval containing the entire Brownian bridge.

Let
\(
        x_\wedge=x\wedge z,\ 
        x_\vee=x\vee z,
\)
and choose a strictly increasing sequence
\(
        0=a_0<a_1<a_2<\cdots
\)
such that \(a_i\to\infty\). Define the lower and
upper boundaries
\(
        \ell_i=x_\wedge-a_i,
        \ 
        u_i=x_\vee+a_i,\ i\geq 0
\).
For \(i\geq1\), introduce the events
\begin{align*}
\mathcal U_i
 &=
 \left\{
        u_{i-1}\leq\overline B<u_i
 \right\}
 \cap
 \left\{
        \underline B>\ell_i
 \right\},
 \\
\mathcal L_i
 &=
 \left\{
        \ell_i<\underline B\leq\ell_{i-1}
 \right\}
 \cap
 \left\{
        \overline B<u_i
 \right\},
\end{align*}
and set
$\mathcal D_i=\mathcal U_i\cup\mathcal L_i$.
Up to events of Brownian bridge probability zero, the collection
\(\{\mathcal D_i:i\geq1\}\) is a partition of the path space. Define
the discrete random variable \(I=I(B)\) by
\(
        \{I=i\}=\mathcal D_i.
\)
The variable \(I\) is called the \emph{layer} of the Brownian bridge. The layer has a particularly simple interpretation. For every
\(i\geq1\),
\begin{equation*}
\begin{split}
        \{I\leq i\}
        &=
        \left\{
        \ell_i<B_t<u_i
        \text{ for every }0\leq t\leq T
        \right\}
        =
        \left\{
        \underline B>\ell_i,\;
        \overline B<u_i
        \right\}.
\end{split}
\end{equation*}
Thus \(I\) is the index of the first interval
\((\ell_i,u_i)\) in the nested sequence
\(
        (\ell_1,u_1)\subset(\ell_2,u_2)\subset\cdots
\)
that contains the complete bridge path. On the event \(\{I=i\}\),
we have \( \ell_i<B_t<u_i, 0\leq t\leq T\),
and therefore
 $M_{\phi,i}
        =
        \sup_{\ell_i\leq u\leq u_i}\phi(u)
        <\infty
$
is a valid height for the Poisson rejection rectangle. More generally,
any valid upper bound for $M_{\phi,i}$ may be used.
The random variable $I$ can be simulated exactly, again, using Devroye's series approach. Its cdf can be represented as an infinite series which can be approximated from above and below by monotone sequences.

After \(I=i\) has been sampled, EA3 must simulate the Brownian bridge
at the Poisson times from
\(
        \mathbb W_i^{x\to z}
        :=
        \mathbb W^{x\to z}(\,\cdot\mid\mathcal D_i).
\)
This conditional law is referred to as a \emph{layered Brownian
bridge}. The obvious rejection sampler: proposing an ordinary Brownian bridge and retaining it
when its layer is \(i\) is not satisfactory. Conditional on \(I=i\),
the expected number of such proposals is
\(1/\mathbb W^{x\to z}(\mathcal D_i)\), which has infinite unconditional expectation. \cite{BeskosPapaspiliopoulosRoberts2008} develop a more efficient sampler, and we omit all the details here.

At a high level, the EA3 procedure is therefore as follows. First draw the
proposal endpoint \(Z\sim h\); for exact diffusion-bridge simulation,
set \(Z=z\) equal to the prescribed endpoint. Conditional on \(Z\),
simulate the layer \(I\). Compute the finite layer-dependent
bound
\(
        M_{\phi,I}
        =
        \sup_{\ell_I\leq u\leq u_I}\phi(u),
\)
and simulate a unit-rate Poisson process on
\(
        [0,T]\times[0,M_{\phi,I}].
\)
Next, use the layered Brownian bridge rejection sampler to generate the bridge values \(B_{S_1},\ldots,B_{S_K}\) at the Poisson
times, conditional on \(Z\) and \(I\). Finally, accept the proposal if no Poisson points fall below the graph of $\phi$, and otherwise restart the construction.

\subsection{Brownian building blocks}
Before detailing the main contributions of this work we review two fundamental building blocks necessary for the algorithms we develop below. 

\subsubsection{Interval-constrained Brownian interpolation}

For a non-degenerate interval \([\ell,u]\) and two points $x,z\in [\ell, u]$, set \(a=u-\ell\). 
Let \(D=(0,a)\), and $r\in D$. Set \(p^D(y;r,t)\) to be the Brownian transition density killed on exiting \(D\) (started at $r$). It is known \citep{BorodinSalminen2012, revuz2013continuous} that
\begin{align*}
        p^D(y;r,t)
        &=\sum_{n=-\infty}^{\infty}\frac{1}{\sqrt{2\pi t}}
        \left[\exp\left\{-\frac{(y+2na-r)^2}{2t}\right\}
        -\exp\left\{-\frac{(y+2na+r)^2}{2t}\right\}\right]  \\
        &=\frac{2}{a}\sum_{n=1}^{\infty}
        \sin\left(\frac{\pi n r}{a}\right)
        \sin\left(\frac{\pi n y}{a}\right)
        \exp\left\{-\frac{n^2\pi^2t}{2a^2}\right\},\quad 0<y<a\ . \nonumber
\end{align*}
The first expression is the reflection representation; the second is the Dirichlet eigenfunction representation. These complementary series are the two analytic forms exploited by the interval-constrained sampler. For any interior time point \(0<t<T\), interior end points $x,z\in(\ell, u)$, and $\ell<y<u$, the conditional density of an interval-constrained Brownian bridge from $x$ to $z$ is
\begin{equation}
\label{eq:icb_density}
    \bbW^{x\to z}\{B_t\in \dd y\mid \ell\leq B_s\leq u,
        0\leq s\leq T\}
        \propto p^D(y-\ell;x-\ell, t)p^D(z-\ell;y-\ell, T-t)\,\dd y.
\end{equation}
Equivalently, the unnormalized interpolation density is a product of two killed transition densities. This is the Markov factorization that underlies the exact interpolation algorithms of \citet{HerbeiSomnath2026Interval}. Boundary endpoints $x\downarrow \ell$,$x\uparrow u$ and $z\downarrow \ell$,$z\uparrow u$ are handled by limits from the interior and are treated separately. For example, when the left endpoint is \(x=\ell\), the limiting density factor is proportional to \citep[see][]{durrett1977functionals}
\begin{equation}
\label{eq:h_boundary}
        \sum_{n=-\infty}^{\infty}(2na+(y-\ell))
        \exp\left\{-\frac{(2na+(y-\ell))^2}{2t}\right\},
        \qquad 0\leq y-\ell\leq a.
\end{equation}

For $\Delta>0$, $\ell<u$, and $x,z\in[\ell,u]$, let
\(
\mathbb W_{\Delta;[\ell,u]}^{x\to z}
\)
denote the law of a Brownian bridge from $x$ to $z$ over $[0,\Delta]$,
conditioned to remain in $[\ell,u]$. When $x,z\in(\ell,u)$, this is the
usual conditional bridge law determined by the killed transition
density above. 
When one or both endpoints lie in $\{\ell,u\}$, it is
the corresponding weak entrance or exit limit of the interior laws.
Under each of these boundary-limit laws,
\(
\ell<B_s<u,
\)
for all $0<s<\Delta,$ almost surely. \cite{HerbeiSomnath2026Interval} give an exact sampler for the finite-dimensional
distributions of $\mathbb W_{\Delta;[\ell,u]}^{x\to z}$ in all interior
and boundary cases. We write
\[
\operatorname{ICB}(\Delta,x,z,\ell,u;\mathcal T)
\]
for this primitive, where $\mathcal T\subset(0,\Delta)$ is finite.
The primitive returns the bridge values at the times in $\mathcal T$.

\subsubsection{Extrema of Brownian bridges and restricted Brownian meanders}

The second necessary ingredient is exact simulation of extrema and their locations for Brownian bridges and restricted Brownian meanders. Under $\bbW^{0\to r}$, the pair $(\overline B, \tau^+)$ has density
\begin{equation}
\label{eq:bridge_max_density}
        p_r(m,t)=\frac{\sqrt{2\pi T}\,\exp\{r^2/(2T)\}\,m(m-r)}
        {\pi\{t(T-t)\}^{3/2}}
        \exp\left\{-\frac{m^2}{2t}-\frac{(m-r)^2}{2(T-t)}\right\}\ ,
\end{equation}
for \(m\geq \max(0,r)\) and \(0<t<T\). The marginal distribution of \(\overline B\) and the conditional distribution of \(\tau^+\mid \overline B\) lead to efficient exact samplers \citep{Devroye2010,SomnathHerbei2026MaxLocation}. For a bridge from \(x\) to \(z\), one applies \eqref{eq:bridge_max_density} under \(\bbW^{0\to z-x}\) and then shifts the maximum by \(x\). We use 
\[
\BridgeMax(x,z,T)
\] 
to denote this exact sampler for $(\overline B, \tau^+)$ under \(\bbW^{x\to z}\). The minimum and its location are obtained by applying the same sampler to the reflected bridge.

For $L>0$ and $r>0$, let
$\mathbb W_{\mathrm{me},L}^{0\to r}$ denote the restricted Brownian
meander law on $[0,L]$, started at zero, conditioned to remain
nonnegative, and conditioned to end at $r$. The joint density of $(\overline B, \tau^+)$  under $\bbW^{0\to r}_{{\rm me}, T}$ derived in \citet{SomnathHerbei2026MaxLocation} has the series representation
\begin{equation}
\label{eq:restricted_meander_joint}
        p_r^{\rm me}(m,t)\propto
        \frac{1}{m^5}\sum_{j,k=1}^{\infty}(-1)^{j+k}j k^2
        \sin\left(\frac{j\pi r}{m}\right)
        \exp\left\{-\frac{k^2\pi^2 t}{2m^2}-\frac{j^2\pi^2(T-t)}{2m^2}\right\},
\end{equation}
for \(m>r\) and \(0<t<T\). Conditionally on \(\overline B=m\), the density of \(\tau^+\) is proportional to
\begin{equation}
\label{eq:Q_factorization}
        Q(t;m,r)=g(t;m)f(T-t;m,r),\qquad 0<t<T,
\end{equation}
where
\begin{align*}
        g(s;m)&=\frac{\pi^2}{m^2}\sum_{k=1}^{\infty}(-1)^{k+1}k^2
        \exp\left\{-\frac{k^2\pi^2 s}{2m^2}\right\}, \\
        f(s;m,r)&=\frac{\pi}{mr}\sum_{j=1}^{\infty}(-1)^{j+1}j
        \sin\left(\frac{j\pi r}{m}\right)
        \exp\left\{-\frac{j^2\pi^2s}{2m^2}\right\}.
\end{align*}
The function \(g(\cdot;m)\) is the exit-time density at level \(m\) for a three-dimensional Bessel process started at zero; \(f(\cdot;m,r)\) is the corresponding exit-time density for a three-dimensional Bessel process started at \(r\in(0,m)\). The factorization \eqref{eq:Q_factorization} is both analytic and probabilistic: conditional on the maximum and its location, the premaximum piece and the time-reversed postmaximum piece are independent Bessel first-passage bridges. We write
\[
        \RBMMax(r,T)
\]
for the exact sampler for \((\overline B,\tau^+)\) under $\bbW^{0\to r}_{{\rm me},T}$. This primitive is the main Brownian-extrema building block imported from \citet{SomnathHerbei2026MaxLocation}.

\section{An Exact Sampler Based on Range Reconstruction}
\label{sec:mainsampler}
We are now in the position to detail the main contribution of this paper: an EA3-type exact sampler, which avoids the layered Brownian bridge construction. Instead, it reconstructs the exact range of the Brownian proposal. The sampler uses the maximum-first version described above, because it matches directly with the restricted Brownian meander extrema algorithms. The minimum-first version is mathematically equivalent and can be implemented by reflection.

\subsection{EA3 in the style of EA2}

There is a second, more constructive way to view EA3. Start exactly as EA2 by decomposing the Brownian bridge proposal at one extremum. Suppose $Z\sim h$ and given $Z$, $B\sim \bbW^{x \to Z}$. We decompose the path $B$ at the minimum \((\underline B,\tau^-)\). The two pieces above \(\underline B\) are Bessel bridges. EA2 stops at this point because \(\underline B\) alone is enough to bound \(\phi\) under \eqref{eq:A5}. For a two-sided unbounded \(\phi\), one continues one step further: simulate the maxima of the two Bessel bridges. If those maxima are \(H_1\) and \(H_2\), then
\begin{equation*}
        \overline B=\underline B+\max(H_1,H_2),
\end{equation*}
so the exact range \([\underline B,\overline B]\) is now known. Conversely, one may decompose first at the maximum \((\overline B,\tau^+)\). The reflected pieces \(\overline B-B\) are restricted Brownian meanders, and simulating their maxima gives the depths below \(\overline B\); the minimum is then recovered exactly.

This observation is the bridge between the original layered EA3 and the algorithm proposed here. The original EA3 samples a layer containing the range. The present construction samples the exact range itself by supplementing the EA2 decomposition with exact simulation of the maxima of both Bessel-bridge pieces. Once the range and the associated sidewise extrema and their locations have been simulated, the remaining finite-dimensional sampling problem is an interval-constrained Brownian interpolation problem of the form \eqref{eq:icb_density}.

\subsection{Range reconstruction by decomposition at the maximum}

At this point we remind the reader of the general strategy in the rejection sampler: a draw $B\sim \bbZ^{x}$ must be ``revealed'' sufficiently to allow the computation of the following sequence of steps: given $B=\omega$, 
\begin{align*}
&\mbox{ calculate }\ M_\phi(\omega) 
\quad\Rightarrow\quad 
\mbox{ simulate }\ K \sim {\rm Poisson}(TM_\phi(\omega))\quad \Rightarrow\\ 
&\mbox{ simulate iid}\ S_i\sim{\rm Uniform}(0,T), i=1, \ldots, K
\quad \Rightarrow\quad 
\mbox{ reveal }\ \omega_{S_i}\ .
\end{align*}

We differ from the original EA3 construction in two respects:
(1) the Poisson bound is based on the exact realized range of $\omega$,
rather than on an enclosing layer; and (2), conditional on the simulated
extrema and their locations, the values $\omega_{S_i}$ are generated from
four interval-constrained Brownian bridge laws rather than from a layered
Brownian bridge law.

Begin with the first task. We base our strategy on a classical result of \cite{Williams1974}: the decomposition of a Brownian path. The draw $B\sim \bbZ^x$ starts with proposing the endpoint \(Z\sim h\) and, conditional on \(Z\), let \(B\sim \bbW^{x\to Z}\). 

\paragraph{Step 1:} Simulate the maximum and its location (for $B$, under $\bbW^{x\to Z}$):
$$
(\overline B, \tau^+) \leftarrow \BridgeMax(x,Z,T)
$$
and define the left and right depth processes
\begin{align}
\label{eq:left_depth}
        R_L(s)&=\overline B-B_{\tau^+-s},\qquad 0\leq s\leq \tau^+,\\
\label{eq:right_depth}
        R_R(s)&=\overline B-B_{\tau^++s},\qquad 0\leq s\leq T-\tau^+.
\end{align}

By the decomposition at the maximum for one-dimensional diffusion bridges \citep{Williams1974,PitmanYor1996}, for the joint law of
$(Z,\overline B,\tau^+)$-almost every $(z,m,t)$,
\[
\mathcal L\!\left(
R_L,R_R
\,\middle|\,
Z=z,\overline B=m,\tau^+=t
\right)
=
\mathbb W_{\mathrm{me},t}^{0\to m-x}
\otimes
\mathbb W_{\mathrm{me},T-t}^{0\to m-z}.
\]
Thus, conditional on $(Z,\overline B,\tau^+)$, the two depth processes
are independent restricted Brownian meanders with respective durations
$\tau^+$ and $T-\tau^+$ and terminal values
$\overline B-x$ and $\overline B-Z$.
Equivalently,
\begin{align*}
    \calL(R_L\mid \overline B, \tau^+, Z) &= \bbW^{0\to \overline B-x}_{{\rm me}, \tau^+},
       \\
    \calL(R_R\mid \overline B, \tau^+, Z)&=\bbW^{0 \to \overline B-Z}_{{\rm me}, T-\tau^+},
\end{align*}
conditionally independently. 
\paragraph{Step 2:} Simulate the maximum and its location for the two (left and right) pieces:
\begin{equation*}
        (H_L,\sigma_L)\leftarrow \RBMMax(\overline B-x,\tau^+),
        \qquad
        (H_R,\sigma_R)\leftarrow \RBMMax(\overline B-Z,T-\tau^+),
\end{equation*}
independently conditional on \((\overline B,\tau^+,Z)\). Here \(H_L\) and \(H_R\) are the maxima of the two depth processes. Therefore the local minima of the original Brownian bridge on the two sides of \(\tau^+\) are
\begin{equation*}
        m_L=\overline B-H_L,
        \quad t_L=\tau^+-\sigma_L,
        \qquad
        m_R=\overline B-H_R,
        \quad t_R=\tau^++\sigma_R.
\end{equation*}
The global minimum of $B$ is then
\begin{equation*}
        \underline B=\min\{m_L,m_R\},
\end{equation*}
and the exact range of the proposed path $B$ is \([\underline B,\overline B]\). Consequently, 
\begin{equation}
\label{eq:range_bound_phi}
        \sup_{\underline B\leq u\leq \overline B}\phi(u) = \sup_{0\le t\le T}\phi(\omega_t)
\end{equation}
is finite and is one valid Poisson height. As we will see, any quantity $M_\phi(\omega)$ that can be calculated and is at least as large as \eqref{eq:range_bound_phi} will suffice. This addresses the first step in the sequence above. The path decomposition described here is illustrated in Figure \ref{fig:decomp}. The left panel shows a Brownian bridge path from $x$ to $Z$, which is decomposed at its maximum $\overline{B}$ and the minimum of its left and right sub-paths. The right panels illustrate the left and right Bessel bridge depth processes $R_L$ and $R_R$, after the appropriate time reversal and time and location shift as in \eqref{eq:left_depth}, \eqref{eq:right_depth}.

\begin{figure}
    \centering
    \includegraphics[width=0.99\linewidth]{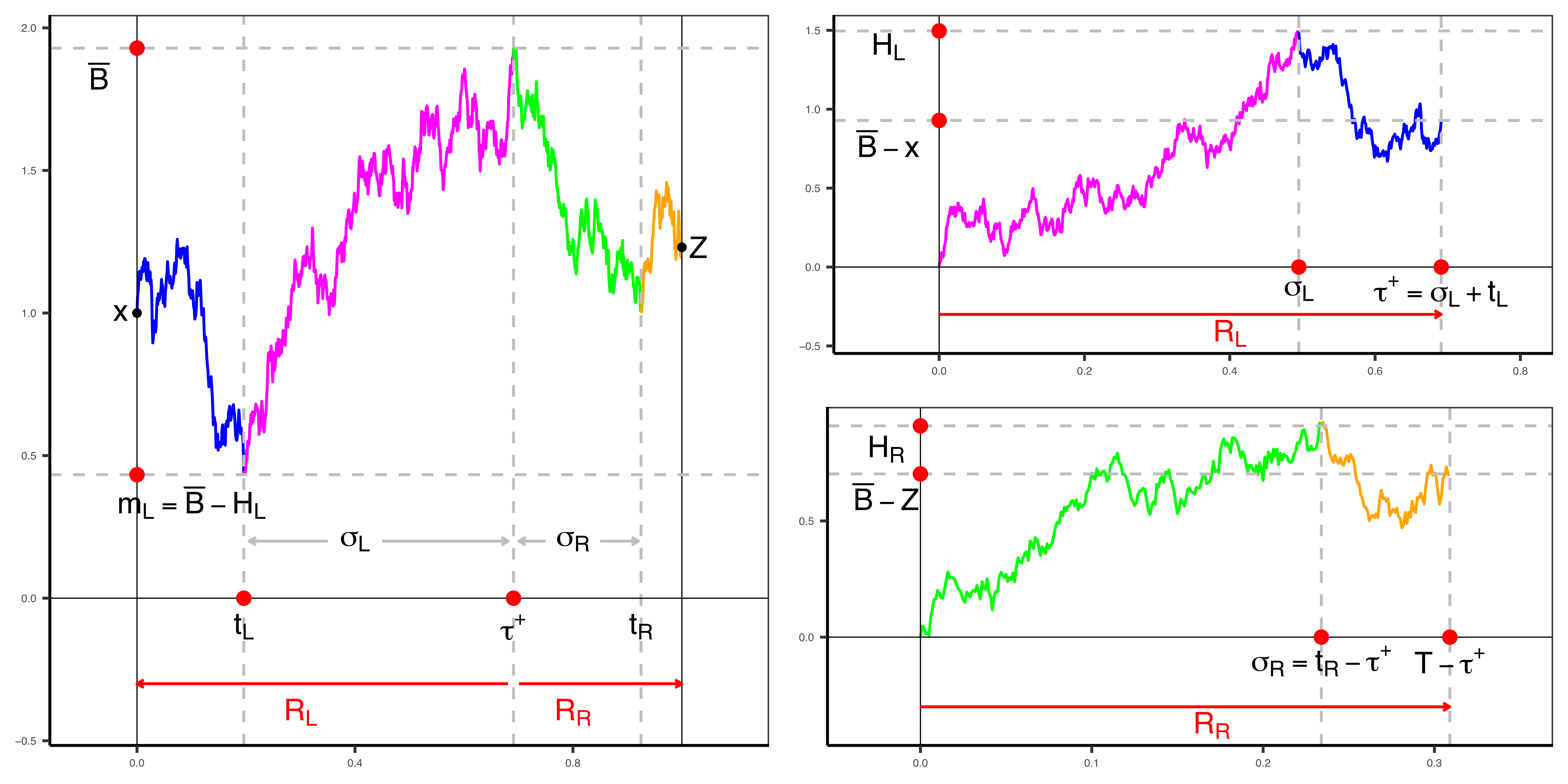}
\caption{Left panel: a Brownian bridge path decomposed at its
maximum $\overline B$ and at the minima of its left and right
subpaths. Right panels: the corresponding restricted Brownian
meander, or equivalently Bessel-bridge, depth processes $R_L$
(top) and $R_R$ (bottom), after the time reversal and shifts in
\eqref{eq:left_depth}--\eqref{eq:right_depth}.}
    \label{fig:decomp}
\end{figure}

\newpage
\subsection{Conditional law of the remaining path components}
\label{sec:cond:law}
Once a Poisson height $M_\phi(\omega)$ has been calculated and the Poisson count $K$ has been generated, it remains to explain how we reveal the values $\omega_{S_i}$ at the uniform times $S_i$, $i=1\ldots, K$. For this, we appeal to the following conditional decomposition of the proposed path $B$.

Let
\[
\mathcal G
=
\sigma\!\left\{
Z,\overline B,\tau^+,
H_L,\sigma_L,H_R,\sigma_R
\right\}.
\]
Conditional on $\mathcal G$, the following five time--value pairs are now known:
\begin{equation*}
(0,x),\qquad
(t_L,m_L),\qquad
(\tau^+,\overline B),\qquad
(t_R,m_R),\qquad
(T,Z).
\end{equation*}
The remaining path is distributed as a product of four conditionally
independent interval-constrained Brownian bridges. More explicitly,
\[
\begin{array}{c|c|c|c}
\text{piece}
&
\text{time interval}
&
\text{endpoints}
&
\text{constraint}
\\ \hline
1
&
[0,t_L]
&
x\longrightarrow m_L
&
[m_L,\overline B]
\\
2
&
[t_L,\tau^+]
&
m_L\longrightarrow\overline B
&
[m_L,\overline B]
\\
3
&
[\tau^+,t_R]
&
\overline B\longrightarrow m_R
&
[m_R,\overline B]
\\
4
&
[t_R,T]
&
m_R\longrightarrow Z
&
[m_R,\overline B].
\end{array}
\]
These are illustrated in the left panel of Figure \ref{fig:decomp}: piece 1 is the left-most piece (blue), piece 2 follows to the right (magenta), piece 3 is next (green) and piece 4 is the right-most piece (orange).
We now justify this decomposition. Conditional on
$(\overline B,\tau^+,Z)$, the two depth processes $R_L$ and $R_R$
defined in \eqref{eq:left_depth}--\eqref{eq:right_depth}  are independent restricted Brownian meanders \citep{Williams1974}.
The pair $(H_L,\sigma_L)$ is determined entirely by $R_L$, while
$(H_R,\sigma_R)$ is determined entirely by $R_R$. Consequently,
conditioning further on these two pairs preserves the conditional
independence of the left and right depth processes.

Consider first $R_L$. Conditional on $\calG$, the portion of
$R_L$ before $\sigma_L$ starts at zero, ends at $H_L$, and remains in
$[0,H_L]$. The portion after $\sigma_L$ starts at $H_L$, ends at
$\overline B-x$, and also remains in $[0,H_L]$. Using again the path-decomposition result for a Bessel bridge, we
have that, conditional on the maximum and its location, the
premaximum fragment and the time reversal of the postmaximum
fragment are independent first-passage bridge pieces. Equivalently,
in their original time directions, the two fragments have the
corresponding boundary-limit interval-constrained Brownian bridge
laws implemented by the $\operatorname{ICB}$ primitive.

The same argument applies to $R_R$. Conditional on
$\calG$, the portion before $\sigma_R$ starts at zero, ends at
$H_R$, and remains in $[0,H_R]$. The portion after $\sigma_R$ starts at
$H_R$, ends at $\overline B-Z$, and remains in $[0,H_R]$. These two
portions are conditionally independent and are interpreted as the
corresponding interval-constrained Brownian bridge pieces.

It remains to translate these four depth-process pieces back to the
original Brownian bridge. For the left side, the definition
\[
R_L(s)=\overline B-B_{\tau^+-s}
\]
reverses the direction of time. The portion of $R_L$ on
$[\sigma_L,\tau^+]$, after this time reversal, becomes the portion of
$B$ on $[0,t_L]$. Since
\[
R_L(\tau^+)=\overline B-x,
\qquad
R_L(\sigma_L)=H_L=\overline B-m_L,
\]
this gives a bridge from $x$ to $m_L$ that remains in
$[m_L,\overline B]$, which is piece~1.

Similarly, the portion of $R_L$ on $[0,\sigma_L]$, after time reversal,
becomes the portion of $B$ on $[t_L,\tau^+]$. It runs from $m_L$ to
$\overline B$ and remains in $[m_L,\overline B]$, which is piece~2.

For the right side, the definition
\[
R_R(s)=\overline B-B_{\tau^++s}
\]
does not reverse time. The portion of $R_R$ on $[0,\sigma_R]$ therefore
becomes the portion of $B$ on $[\tau^+,t_R]$. It runs from
$\overline B$ to $m_R$ and remains in $[m_R,\overline B]$, which is
piece~3. The portion of $R_R$ on $[\sigma_R,T-\tau^+]$ becomes the
portion of $B$ on $[t_R,T]$. It runs from $m_R$ to $Z$ and remains in
$[m_R,\overline B]$, which is piece~4.

All four pieces have at least one endpoint on the boundary of their
conditioning interval, and pieces~2 and~3 have both endpoints on the
boundary. These cases are interpreted through the boundary limits
underlying \eqref{eq:h_boundary}. In particular, the limiting constrained bridge laws
remain in the interior of the conditioning interval between their
endpoint times, so the extrema and their locations specified above are
preserved.

It follows that, conditional on $\mathcal G$, the four path pieces are
conditionally independent and have exactly the interval-constrained
Brownian bridge laws displayed in the table. Therefore, every value
required at a Poisson time $S_i$ can be sampled exactly using the
\ICB~primitive. The details are given in the next section.

\subsection{The range-reconstructed exact algorithm}
The entire range-reconstructed exact algorithm is presented below, from start to finish. For clarity and exactness, we explicitly state the following two assumptions:
\begin{enumerate}[label=(A\arabic*),ref=A\arabic*,leftmargin=2.2em,start=6]
\item \label{assump:A6} For any interval $[a,b]$ a valid finite bound
$$
\widehat M_\phi(a,b)\ge \sup_{a\le u\le b}\phi(u)
$$
can be computed exactly.
\item \label{assump:A7} For the unconditional algorithm, a valid exact sampler from the density $h$ is available.
\end{enumerate}
The complete procedure is now given in Algorithm \ref{alg:range_ea}.
\begin{algorithm}[!ht]
\caption{Range-reconstructed Exact Algorithm}
\label{alg:range_ea}
\begin{algorithmic}[1]
\State \textbf{Input:} initial value \(x\), horizon \(T\), drift \(\alpha\), potential \(\phi\), and lower shift \(k_L\). For the diffusion bridge version, input $z$ as well.
\State Draw \(Z\sim h\) in \eqref{eq:h_endpoint}. For the diffusion bridge version, set \(Z=z\).
\State Draw \((\overline B,\tau^+)\leftarrow\BridgeMax(x,Z,T)\).
\State Draw \((H_L,\sigma_L)\leftarrow\RBMMax(\overline B-x,\tau^+)\). Set \(m_L=\overline B-H_L\) and \(t_L=\tau^+-\sigma_L\).
\State Draw \((H_R,\sigma_R)\leftarrow\RBMMax(\overline B-Z,T-\tau^+)\). Set \(m_R=\overline B-H_R\) and \(t_R=\tau^++\sigma_R\).
\State Set \(\underline B=\min\{m_L,m_R\}\), and compute 
\[
M_\phi  = \widehat M_\phi(\underline{B}, \overline{B}) \ge \sup_{\underline{B}\ \leq\ u\ \leq\  \overline B}\phi(u)\ .
\]
\State Simulate a unit-rate Poisson process \(\Phi\) on \([0,T]\times[0,M_\phi]\). Write 
\[
\Phi=\{(S_i,V_i)\ :\ 1\leq i\leq K\}\ .
\]
\State Partition the times \(S_1,\ldots,S_K\) into the four intervals \([0,t_L]\), \([t_L,\tau^+]\), \([\tau^+,t_R]\), and \([t_R,T]\). 
Almost surely, no Poisson time equals $0, t_L, \tau^+, t_R,T$. Relabel the Poisson points, preserving each pair $(S_i, V_i)$ so that 
$
0<S_1 < \ldots < S_K<T.
$
\State Use \(\ICB\) to sample the values \(B_{S_i}\) in each interval, using the endpoints and constraints in the table above.
\State Compute \(\Sigma=\sum_{i=1}^K \ind\{V_i<\phi(B_{S_i})\}\).
\If{\(\Sigma=0\)}
    \State Accept and output the skeleton $(\calS, \Lambda)$ defined as:
    \begin{align*}
\calS &= \left(K, \{(S_i, B_{S_i}), 1\le i\le K\}\right)\\
\Lambda &= \Big\{(0,x), (t_L, m_L), (\tau^+, \overline B), (t_R, m_R), (T,Z)\Big\}
\end{align*}
\Else
    \State Reject and return to Step 2.
\EndIf
\end{algorithmic}
\end{algorithm}

Several details are worth emphasizing. First, Step 6 uses the exact range of the proposed path, which is in contrast to the original EA3 algorithm.  Assumption~\ref{assump:A6} ensures that a finite bound $M_\phi$ can be computed exactly. Any valid
upper bound will suffice; the actual supremum is not required.
Finding $M_\phi$ is problem-dependent, and its value affects the
efficiency of the sampling procedure.

Second, Step 9 is where the interval-constrained interpolation algorithms are used: the conditional laws in the four pieces are of the form \eqref{eq:icb_density}, including the boundary cases. If at Step 7 we obtain $K=0$, skip the interpolation and accept automatically. If $K>0$, without loss of generality we can assume that 
\(
0<S_1<S_2<\ldots<S_K<T\ .
\)
Let
\begin{align*}
    \calT_1 &= \{S_i\ :\ 0< S_i < t_L\}, \\
    \calT_2 &= \{S_i-t_L\ :\ t_L< S_i <\tau^+\},\\
    \calT_3 &= \{S_i-\tau^+\ :\ \tau^+< S_i < t_R\}, \\
    \calT_4 &= \{S_i-t_R\ :\ t_R< S_i < T\}\ .
\end{align*}
Then, Step 9 consists of the following four calls:
\begin{align*}
    &\ICB(t_L, x, m_L, m_L, \overline{B}; \calT_1)\\
    &\ICB(\tau^+-t_L, m_L, \overline{B}, m_L, \overline{B}; \calT_2)\\
    &\ICB(t_R - \tau^+, \overline{B}, m_R, m_R, \overline{B}; \calT_3)\\
    &\ICB(T-t_R, m_R, Z, m_R, \overline{B}; \calT_4)
\end{align*}
If any set $\calT_j, 1\le j\le 4$ is empty, the corresponding $\ICB$ call is removed.

Third, upon acceptance, the algorithm returns the skeleton
\begin{align*}
\calS &= \left(K, \{(S_i, B_{S_i}), 1\le i\le K\}\right)\\
\Lambda &= \Big\{(0,x), (t_L, m_L), (\tau^+, \overline B), (t_R, m_R), (T,Z)\Big\}
\end{align*}
Given any collection of additional time points $\calT_{\rm req} = \{ t_1, \ldots, t_M\}\subset[0,T]$, they can be inserted into the relevant intervals (which now include the points $S_i$) and subsequent calls to \(\ICB\) again, conditional on the accepted skeleton, will refine the accepted path to any degree necessary.

\begin{theorem}
Assume \textnormal{(A0)}--\textnormal{(A3)} and \textnormal{(A6)}. For the
unconditional version of the algorithm, also assume \textnormal{(A7)}.
Suppose that the samplers
\(
\BridgeMax,
\RBMMax,
\ICB
\)
return exact draws from the distributions specified in
Sections~2.3.1--2.3.2. Let $M_\phi$ denote the finite bound used in
Step~6. Then Algorithm~\ref{alg:range_ea} accepts after finitely many proposals almost surely and
returns an exact skeleton from the diffusion law $\bbQ^x$ in the sense of
\eqref{eq:skeleton}. For the diffusion bridge version, the algorithm returns an exact
skeleton from $\bbQ^{x\to z}$.

Moreover, conditional on the accepted output $(\mathcal S,\Lambda)$, the
unrevealed portions of the path can be sampled exactly at any additional
finite collection of times using further calls to the
$\operatorname{ICB}$ primitive.
\end{theorem}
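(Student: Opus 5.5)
The plan is to realize every random quantity produced in one iteration of Algorithm~\ref{alg:range_ea} as a measurable function of a single proposal path $B\sim\bbZ^x$ (respectively $\bbW^{x\to z}$) together with independent Poisson randomness. Once this coupling is in place, the iteration is an exact implementation of Algorithm~\ref{alg:path_rejection}.

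\textbf{Step 1: coupling with a full proposal path.} On an enlarged probability space, draw $Z\sim h$ and a full bridge $B\sim\bbW^{x\to Z}$. Define $(\overline B,\tau^+)$, $R_L$, $R_R$, $(H_L,\sigma_L)$, $(H_R,\sigma_R)$ as path functionals of $B$. Three facts then fix the joint law of these quantities:
\begin{itemize}
\item the density \eqref{eq:bridge_max_density} gives the law of $(\overline B,\tau^+)$;
\item the Williams decomposition gives that, conditional on $(Z,\overline B,\tau^+)$, the processes $R_L,R_R$ are independent restricted meanders;
\item the meander maximum law \eqref{eq:restricted_meander_joint} gives the law of $(H_L,\sigma_L)$ and $(H_R,\sigma_R)$.
\end{itemize}
Consequently, the joint law of $(Z,\overline B,\tau^+,H_L,\sigma_L,H_R,\sigma_R)$ coincides with that produced by Steps 2--5, under the stated exactness of $\BridgeMax$ and $\RBMMax$.

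\textbf{Step 2: the sequential draws match the true conditional law.} Section~\ref{sec:cond:law} shows that, conditional on $\calG$, the four path pieces are independent interval-constrained bridges. Hence the $\ICB$ values at any finite set of times agree in law with $B$ evaluated at those times. Because the Poisson process is drawn independently given $\calG$, and its height $M_\phi$ is $\calG$-measurable, the $\ICB$ draws at the Poisson times have the correct conditional law given $(\calG,\Phi)$.

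\textbf{Step 3: correct acceptance probability.} Since $[\underline B,\overline B]$ is the exact range of $B$, assumption (A6) gives $M_\phi\ge\sup_t\phi(B_t)$. The Poisson-thinning computation of Section~2.2, applied conditionally on the whole path $B$, then yields
\[
\Pr(\text{accept}\mid B)=a(B).
\]
This includes the case $M_\phi=0$.

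\textbf{Step 4: exactness of the accepted path.} By \eqref{eq:rnqz}, or by the definition of $\bbQ^{x\to z}$ in the bridge case, the accepted $B$ has law $\bbQ^x$ (respectively $\bbQ^{x\to z}$). The output $(\calS,\Lambda)$ consists of functionals of this $B$ and of independent marks, so it is an exact skeleton in the sense of \eqref{eq:skeleton}.

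\textbf{Step 5: finite termination.} Each iteration uses fresh independent randomness, so the iterations are i.i.d. trials with success probability $p=\EE_{\bbZ^x}[a(B)]$. This quantity is positive because $a>0$ pointwise: $\phi$ is continuous and paths are bounded, so $\int_0^T\phi(B_t)\,\dd t<\infty$. The number of proposals is therefore geometric and finite almost surely.

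\textbf{Refinement.} Acceptance depends on $B$ only through the values $(B_{S_i})$ and the marks. Hence, conditional on $(\calG,\Phi,\{B_{S_i}\})$ and on acceptance, the law of the rest of $B$ is unchanged. By the Markov property of each constrained bridge, each of the four pieces splits further at the revealed times $S_i$ into independent interval-constrained bridges with known endpoints, and these can be sampled by $\ICB$.

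\textbf{Main obstacle.} The delicate point is justifying that conditioning on the acceptance event does not distort the conditional law of the unrevealed path. This requires writing the acceptance indicator as a function of $(\Phi,B_{S_1},\dots,B_{S_K})$ only, and handling regular conditional laws at boundary endpoints, where the bridges are entrance and exit limits. I would handle this by a disintegration argument, then check that the $\ICB$ boundary laws are the weak limits described in Section~2.3.1.
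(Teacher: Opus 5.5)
Your proposal is correct and takes essentially the same route as the paper. It couples the algorithm's draws with a full proposal path via the decomposition at the maximum and the four-piece $\operatorname{ICB}$ factorization, uses Poisson thinning to get $\Pr(\Sigma=0\mid B)=a(B)$, and concludes exactness from \eqref{eq:rnqz} and almost-sure termination from $\EE_{\bbZ^x}[a(B)]>0$; it then derives refinement from the fact that acceptance depends on the path only through $B_{S_1},\ldots,B_{S_K}$ and the independent marks $V_i$. The paper handles the disintegration step you flag as the main obstacle by that same observation, written as the factorization $\EE[F(B_{\mathrm{rem}})\mathbf 1_{\{\Sigma=0\}}\mid\mathcal S,\Lambda]=\EE[F(B_{\mathrm{rem}})\mid\mathcal S,\Lambda]\Pr(\Sigma=0\mid\mathcal S,\Lambda)$.
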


\begin{proof}
We first consider the unconditional version. A convenient way to analyze
the algorithm is to imagine that an entire Brownian proposal path is drawn
first, even though the algorithm itself reveals only finitely many values
of that path. Step~2 draws $Z$ from the density $h$, and, conditionally on
$Z$, the proposal path is a Brownian bridge from $x$ to $Z$. By the
definition in \eqref{eq:Z_def}, the resulting marginal law of the
complete proposal path is $\bbZ^x$.

Step~3 draws the maximum $\overline B$ and its location $\tau^+$ from
their exact joint distribution under the Brownian bridge proposal.
Conditional on $(\overline B,\tau^+,Z)$, the two depth processes in
\eqref{eq:left_depth}--\eqref{eq:right_depth} are independent restricted
Brownian meanders. Steps~4 and~5 draw the maxima and their locations for
these two meanders from their exact conditional distributions. Therefore,
the quantities
$m_L$, $t_L$,$m_R$,$t_R$ generated by the algorithm have the same joint distribution as the
corresponding sidewise minima and their locations obtained from a complete
Brownian proposal path. In particular,
\(
\underline B=\min\{m_L,m_R\}
\)
is the exact global minimum of that proposal, while $\overline B$ is its
exact global maximum.

Conditional on the extrema and their locations, Section~\ref{sec:cond:law} identifies the
remaining path as the product of the four interval-constrained Brownian
bridge laws displayed there. Hence, after the Poisson times have been
generated, the calls to $\operatorname{ICB}$ in Step~9 sample
\(
B_{S_1},\ldots,B_{S_K}
\)
from their exact joint conditional distribution under the Brownian
proposal. Consequently, all the random quantities generated in
Steps~2--9 have the same joint distribution as they would have under the
following conceptual construction: first draw a complete path
$B\sim\bbZ^x$, compute its extrema and their locations, generate the
Poisson process, and then reveal the path only at the Poisson times.

Now fix a complete proposed path $B$. Because
\(
B_t\in[\underline B,\overline B],  0\leq t\leq T,
\)
the bound used in Step~6 satisfies
\(
\phi(B_t)\leq M_\phi, 0\leq t\leq T.
\)
Conditional on $B$, Step~7 generates a unit-rate Poisson process on
\(
[0,T]\times[0,M_\phi].
\)
The number of Poisson points lying below the graph
$t\mapsto\phi(B_t)$ is therefore Poisson distributed with mean
\(
\int_0^T\phi(B_t)\,dt.
\)
With $\Sigma$ defined as in Algorithm~\ref{alg:range_ea}, it follows that
\[
\Pr(\Sigma=0\mid B)
=
\exp\left\{
-\int_0^T\phi(B_t)\,dt
\right\}
=
a(B).
\]
This conclusion remains valid when $M_\phi=0$: in that case
$\phi(B_t)=0$ for every $t$, the Poisson process contains no points almost
surely, and the proposal is accepted automatically.

Let $D\in\mathcal F$ be any measurable collection of paths. The
distribution of the accepted proposal satisfies
\begin{align*}
\Pr(B\in D\mid\Sigma=0)
&=
\frac{\Pr(B\in D, \Sigma=0)}{\Pr(\Sigma=0)}
=
\frac{
\mathbb E_{\bbZ^x}
\left[
\mathbf 1_{\{B\in D\}}
\Pr(\Sigma=0\mid B)
\right]
}{
\mathbb E_{\bbZ^x}
\left[
\Pr(\Sigma=0\mid B)
\right]
}
\\
&=
\frac{
\mathbb E_{\bbZ^x}
\left[
\mathbf 1_{\{B\in D\}}a(B)
\right]
}{
\mathbb E_{\bbZ^x}[a(B)]
}
=
\bbQ^x(D),
\end{align*}
where the first line uses the law of iterated expectations and the final equality follows from \eqref{eq:rnqz}.
Thus the complete path associated with an accepted proposal has law
$\bbQ^x$. The values recorded in $\mathcal S$ are values of this accepted
path, and the quantities recorded in $\Lambda$ are its extrema, their
locations, and its endpoints. Hence $(\mathcal S,\Lambda)$ is an exact
finite-dimensional skeleton from $\bbQ^x$. The same argument applies to the diffusion bridge version, replacing
the random endpoint $Z$ by the fixed endpoint $z$ and replacing the
proposal law $\mathbb Z^x$ by $\mathbb W^{x\to z}$.

Finally, the probability of accepting any individual proposal is strictly
positive. Indeed,
\[
a(B)
=
\exp\left\{
-\int_0^T\phi(B_t)\,dt
\right\}
>0
\]
for every proposal path for which the integral is finite, and the integral
is finite because the proposal path has compact range and the bound in
Step~6 is finite. It follows that
\(
\mathbb E_{\bbZ^x}[a(B)]>0
\)
in the unconditional case, with the analogous statement under
$\bbW^{x\to z}$ in the bridge case. Independent repetition of the proposal
construction therefore produces an accepted proposal after finitely many
attempts almost surely.

It remains to verify the refinement statement. Write $B_{\mathrm{rem}}$
for the portions of the proposed path that are not revealed in
$(\mathcal S,\Lambda)$. Consider the joint proposal construction before
conditioning on acceptance. Once $(\mathcal S,\Lambda)$ is known, the
following quantities are known:

\[
K,\qquad
S_1,\ldots,S_K,\qquad
B_{S_1},\ldots,B_{S_K},
\]
together with the endpoints, the two sidewise minima, the maximum, and
their locations. The range $[\underline B,\overline B]$ is therefore
known, and the value of $M_\phi$ used by the algorithm is determined.

The only additional variables entering the acceptance decision are the
vertical Poisson coordinates $V_1,\ldots,V_K$. Conditional on
$(\mathcal S,\Lambda)$, these variables are independent of
$B_{\mathrm{rem}}$, and
\[
\{\Sigma=0\}
=
\bigcap_{i=1}^K
\{V_i>\phi(B_{S_i})\}
\]
depends on the path only through the already revealed values
$B_{S_1},\ldots,B_{S_K}$. Consequently, for every bounded measurable functional $F$ of
$B_{\mathrm{rem}}$,
\begin{align*}
&
\mathbb E\left[
F(B_{\mathrm{rem}})
\mathbf 1_{\{\Sigma=0\}}
\mid
\mathcal S,\Lambda
\right]
\\
&\qquad=
\mathbb E\left[
F(B_{\mathrm{rem}})
\mid
\mathcal S,\Lambda
\right]
\Pr(\Sigma=0\mid\mathcal S,\Lambda).
\end{align*}
Thus, under the joint proposal construction,
\[
\mathcal L\left(
B_{\mathrm{rem}}
\mid
\mathcal S,\Lambda,\Sigma=0
\right)
=
\mathcal L\left(
B_{\mathrm{rem}}
\mid
\mathcal S,\Lambda
\right).
\]
The left-hand side is the conditional law of the unrevealed portions of
the accepted diffusion path. The right-hand side is the corresponding
conditional law under the Brownian proposal, which is available through
interval-constrained Brownian interpolation. Acceptance therefore changes
the marginal law of the complete path from the proposal law to the target
diffusion law, but it does not introduce any further dependence on the
unrevealed path once the accepted skeleton and the auxiliary extrema have
been retained.

To refine the path at a finite collection of requested times, combine the
five time--value pairs in $\Lambda$ with the Poisson-time pairs in
$\mathcal S$, and order the resulting pairs by time within each of the four
pieces in Section~3.3. A requested time that is not already present lies
between two adjacent revealed times, say $a<t<b$, belonging to one of the
four original pieces. The values $B_a$ and $B_b$ are known.

If $[a,b]$ lies in piece~1 or piece~2, the conditional path on that
subinterval is a Brownian bridge from $B_a$ to $B_b$, conditioned to
remain in
\(
[m_L,\overline B].
\)
If $[a,b]$ lies in piece~3 or piece~4, the corresponding constraint is
\(
[m_R,\overline B].
\)
Hence, if
\(
a<t_1<\cdots<t_j<b
\)
are the requested times lying in this particular subinterval, their values
are generated by the call
\[
\operatorname{ICB}
\left(
b-a,\,
B_a,\,
B_b,\,
\ell,\,
u;\,
\{t_1-a,\ldots,t_j-a\}
\right),
\]
where $[\ell,u]$ is the constraint belonging to the original piece.
Separate calls are made for requested times lying between different pairs
of adjacent revealed points.

These calls sample from the exact conditional law of the accepted path.
No additional Poisson rejection test is required during refinement. The
newly generated time--value pairs may then be added to $\mathcal S$, and
the same argument may be applied repeatedly to refine the path at any
further finite collection of times.
\end{proof}

\section{Simulations}
\label{sec:sims}

Algorithm~\ref{alg:range_ea} depends on several exact sampling primitives. The first is \(Z\sim h\), which is model-specific and is also required by EA1--EA3. In many examples \(h\) is available in closed form or can be sampled by a simple one-dimensional rejection sampler. The second is \(\BridgeMax\), based on \eqref{eq:bridge_max_density},
and the third is \(\RBMMax\), based on \eqref{eq:restricted_meander_joint}--\eqref{eq:Q_factorization}.
Both extrema primitives ultimately reduce to one-dimensional rejection samplers with series-method comparisons, as in \citet{SomnathHerbei2026MaxLocation}. Fourth, the $\ICB$ routines
\citep{HerbeiSomnath2026Interval} 
are used to construct the proposed skeleton, followed by the Poisson accept/reject test.

The computation of \(M_\phi\) in Step 6 is deterministic conditional on \([\underline{B},\overline{B}]\). If \(\phi\) is a polynomial or a simple analytic function, the supremum can often be computed exactly by checking endpoints and stationary points. Overestimation of $M_\phi$ is allowed. It maintains exactness of Algorithm~\ref{alg:range_ea} and does not change the acceptance probability, but it does increase the expected Poisson count and hence the computational cost.

All the simulations below are conducted on a standard desktop Mac Studio computer with $32\ {\rm GB}$ of memory.

\subsection{Ornstein--Uhlenbeck diffusion}
\label{sec:ou-example}

Consider the Ornstein--Uhlenbeck diffusion
\begin{equation}
    dX_t=(\theta_1-\theta_2X_t)\,dt+dW_t,
    \qquad X_0=x,
    \qquad \theta_1\in\mathbb R,\quad \theta_2>0.
    \label{eq:ou-sde}
\end{equation}
This model is analytically solvable and therefore provides a useful validation
example for Algorithm~\ref{alg:range_ea}. The purpose of this example is to provide a setting in which every
model-specific component of Algorithm~\ref{alg:range_ea} is explicit and the distribution of the accepted output can be checked against known finite-dimensional distributions.
Let
\[
    \mu_\infty=\frac{\theta_1}{\theta_2},
    \qquad
    m(t)=\mu_\infty+(x-\mu_\infty)e^{-\theta_2t},
    \qquad
    q(t)=\frac{1-e^{-2\theta_2t}}{2\theta_2}.
\]
The exact transition law for the OU diffusion \eqref{eq:ou-sde} is
\begin{equation}
    X_t\mid X_0=x
    \sim
    N\{m(t),q(t)\}.
    \label{eq:ou-transition}
\end{equation}
Following the notation from Section \ref{sec:prelim},
\[
    \alpha(u)=\theta_1-\theta_2u,
    \qquad
    A(u)=\theta_1u-\frac{\theta_2u^2}{2}.
\]
We have
\begin{equation}
    \frac{1}{2}\{\alpha^2(u)+\alpha'(u)\}
    =
    \frac{1}{2}(\theta_1-\theta_2u)^2-\frac{\theta_2}{2}.
    \label{eq:ou-lower-bound}
\end{equation}
The exact global lower bound in \eqref{eq:ou-lower-bound} is attained at
\(u=\theta_1/\theta_2\). We therefore use
\(
    k_L=-\theta_2/2.
\)
The resulting potential is
\begin{equation}
    \phi(u)
    =
    \frac{1}{2}(\theta_1-\theta_2u)^2
    =
    \frac{\theta_2^2}{2}
    \left(u-\frac{\theta_1}{\theta_2}\right)^2.
    \label{eq:ou-potential}
\end{equation}
Thus \(\phi(u)\to\infty\) as \(u\to\pm \infty\). Consequently,  neither EA1 nor either one-sided version of EA2 is applicable.

\paragraph{Sampling the proposal endpoint.}

The endpoint density in \eqref{eq:h_endpoint} is
\[
    h(z)
    \propto
    \exp\left\{
        \theta_1z-\frac{\theta_2z^2}{2}
        -\frac{(z-x)^2}{2T}
    \right\}.
\]

Therefore, 
\begin{equation*}
    Z\sim
    N\left(
        \frac{x+\theta_1T}{1+\theta_2T},
        \frac{T}{1+\theta_2T}
    \right).
\end{equation*}

\paragraph{The range-dependent Poisson bound.}

Suppose that Steps 3--5 of Algorithm~\ref{alg:range_ea} have reconstructed the exact proposal
range
\(
    [\underline B,\overline B].
\)
Because the potential in \eqref{eq:ou-potential} is convex, its maximum on
any compact interval is attained at one of the two endpoints. Hence the
exact bound required in Step 6 is
\begin{equation*}
    M_\phi
    =
    \widehat M_\phi(\underline B,\overline B)
    =
    \frac{1}{2}
    \max\left\{
        (\theta_1-\theta_2\underline B)^2,
        (\theta_1-\theta_2\overline B)^2
    \right\}.
\end{equation*}

\paragraph{Exact acceptance probability.}

The acceptance probability also has a closed form in this example. The
normalizing constant in the endpoint density is
\[
    c(x,T)
    =
    \sqrt{\frac{2\pi T}{1+\theta_2T}}
    \exp\left\{
        -\frac{x^2}{2T}
        +
        \frac{(x+\theta_1T)^2}
             {2T(1+\theta_2T)}
    \right\}.
\]
From \eqref{eq:rnqz} the (outer) acceptance rate of Algorithm~\ref{alg:range_ea} is 
\begin{align*}
     p_{\rm acc}(x,T) &= \mathbb E_{\mathbb Z^x}
    \left[
        \exp\left\{
            -\int_0^T\phi(B_s)\,ds
        \right\}
    \right]
    =
    \frac{\sqrt{2\pi T}}{c(x,T)}
    \exp\{A(x)+k_LT\}, \\
   &= \sqrt{1+\theta_2T}
    \exp\left\{
        -\frac{\theta_2T}{2}
        -
        \frac{T(\theta_1-\theta_2x)^2}
             {2(1+\theta_2T)}
    \right\}.
\end{align*}
This provides a check of the complete
implementation, including endpoint generation, range reconstruction,
interval-constrained interpolation, and the Poisson test.

\paragraph{Numerical configuration.}
We use two scenarios: (1) $\theta_1=2, \theta_2=1$ (standard) and (2) $\theta_1 = 6, \theta_2 = 3$ (strong reversion), with the initial value set to $x=1.5$.

Beginning with the standard setup,  and setting the time increment $\Delta=1$, we generate the sample recursively: for each $i=1,\ldots,n$ and $m=0,\ldots,5$, Algorithm~\ref{alg:range_ea} is used to simulate
$X_{(m+1)\Delta}^{(i)}$ using $X_{m\Delta}^{(i)}$ as the initial value. Thus the reported samples at $T=1,\ldots,6$ are obtained by chaining one-step Markov transitions. 
The top panels of Figure~\ref{fig:OU} display histograms and kernel density estimates for $X_T$ based on $n=10,000$ draws from Algorithm~\ref{alg:range_ea} at $T\in \{1.0,  6.0\}$, as well as the theoretical transition density \eqref{eq:ou-transition}.  Table~\ref{tab:ou:1} reports numerical summaries for all time points $T = 1,\ldots, 6$, but only for the last transition $X_{T-\Delta}^{(i)} \to X_T^{(i)}$, $i=1, \ldots, n$. Columns $4-6$ report the average number of proposals per accepted (terminal)
increment,
\(
\overline P = (1/n)\sum_{i=1}^n P_i,
\)
the mean Poisson count per accepted (terminal) increment,
\(
\overline K_{\rm tot} = \frac{1}{n}\sum_{i=1}^n\sum_{j=1}^{P_i} K_{i,j},
\)
 where $K_{i,j}$ is the Poisson count for proposal $j$ in the generation of accepted increment $i$,  and the standard score $D_T$ is defined as
\[
D_T =
\frac{\overline X_T - m(T)}
{\sqrt{q(T)/n}},
\qquad\mbox{ where }\ \ 
\overline X_T = \frac{1}{n}\sum_{i=1}^n X^{(i)}_T\ .
\]
Column $7$ is the empirical-to-exact variance ratio
$$
\frac{S_T^2}{q(T)}, 
\qquad\mbox{ where }\ \ 
S_T^2 = \frac{1}{n-1}\sum_{i=1}^n(X_T^{(i)} - \overline X_T)^2\ .
$$
The next two columns report the effective predicted and empirical
acceptance probabilities at the last transition $X_{T-\Delta}^{(i)} \to X_T^{(i)}$,
\[
\widehat p_{\rm acc}^{\;\rm pred}(\Delta,T)
=
\left[
\frac{1}{n}\sum_{i=1}^n
\frac{1}{p_{\rm acc}(X_{T-\Delta}^{(i)},\Delta)}
\right]^{-1},
\qquad
\widehat p_{\rm acc}^{\;\rm emp}(\Delta,T)
=
\frac{n}{\sum_{i=1}^n P_i}
=
\overline P^{-1}.
\]
These numerical summaries are consistent with the exact transition law and
provide numerical validation for the implementation of Algorithm~\ref{alg:range_ea}.
We repeat the experiment with a strong reversion setup: $\theta_1 = 6, \theta_2 = 3$. In this case, the OU diffusion \eqref{eq:ou-sde} has the same stationary mean, but a smaller stationary variance, with a strong ``pull'' towards the mean driven by $\theta_2 = 3$.  Figure~\ref{fig:OU}, lower panels and Table~\ref{tab:ou:1}, strong-reversion scenario, display the same graphical and numerical summaries as in the standard setup. We again note the strong agreement between theoretical and empirical quantities.

\begin{figure}[!ht]
    \centering
    \includegraphics[width=0.9\linewidth]{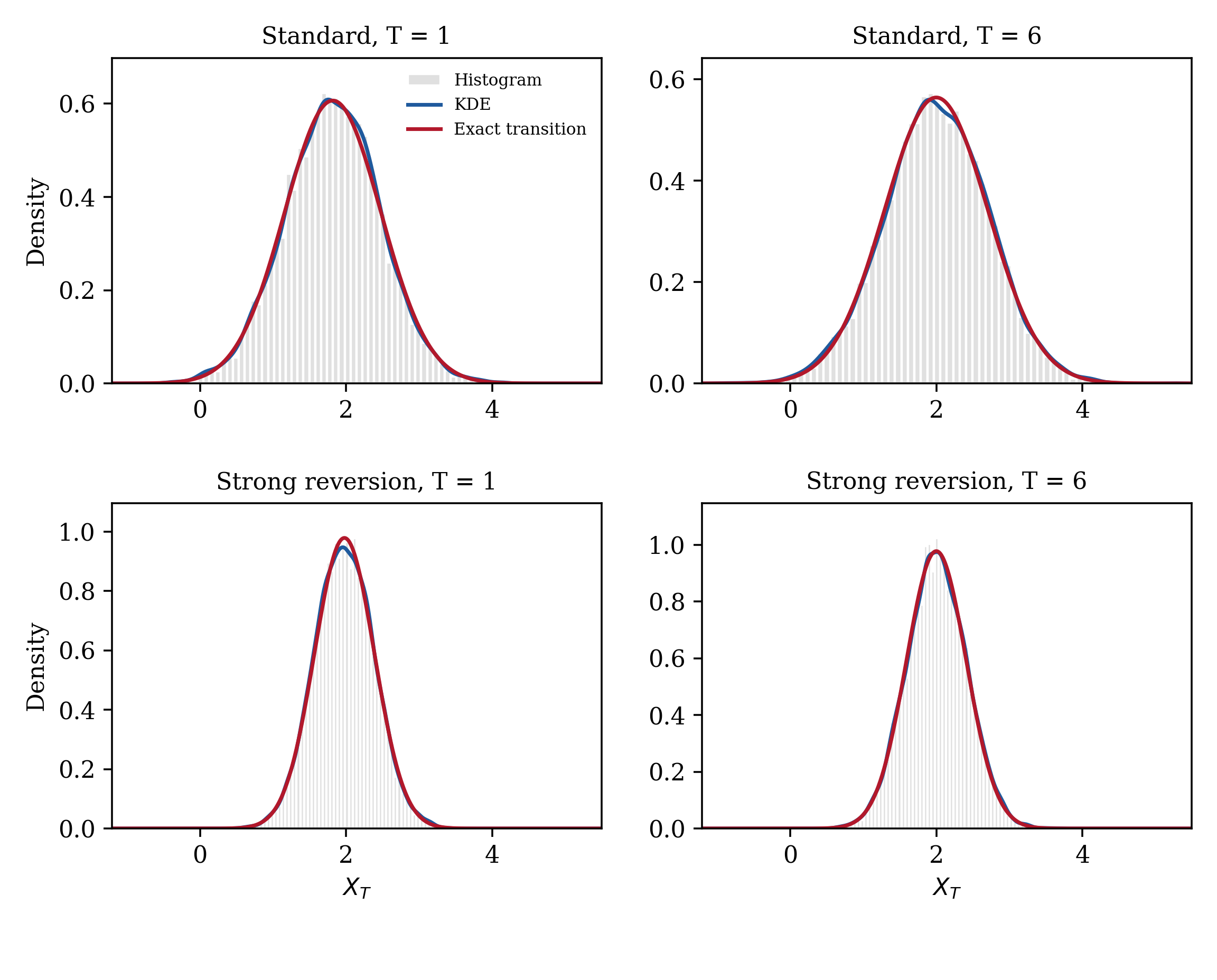}
    \caption{Distributions generated by Algorithm~\ref{alg:range_ea} for the Ornstein--Uhlenbeck diffusion, standard case (top panels) and strong-reversion case (bottom panels). Each distribution is obtained by chaining exact Markov increments of length $\Delta=1$. The left panels display the law of $X_1$ and the right panels display the law of $X_6$. Gray histograms show simulated $X_T$ values, for $T = 1,6$, blue curves show kernel density estimates, and red curves show the exact OU transition densities.}
    \label{fig:OU}
\end{figure}

\begin{table}[!ht]
\centering
\small
\caption{Diagnostics for stepwise Algorithm~\ref{alg:range_ea} simulation of the Ornstein--Uhlenbeck diffusion. Each row summarizes the terminal transition $X_{T-\Delta}\to X_T$ and uses $n=10{,}000$ accepted Markov increments ending at time $T$. Here $\Delta$ is the increment length, $\overline{P}$ is the mean number of proposals per accepted terminal increment, $\overline{K}_{\rm tot}$ is the mean total Poisson count (accepted and rejected proposals) per accepted terminal increment, $D_T$ is the standard score for $X_T$ based on the $n$ samples, $S_T^2/q(T)$ is the empirical-to-exact variance ratio, and $\widehat{p}_{\rm acc}^{\,\mathrm{pred}}(\Delta,T)$ and $\widehat{p}_{\rm acc}^{\,\mathrm{emp}}(\Delta,T)$ are the predicted and empirical acceptance probabilities for the last increment.}
\label{tab:ou:1}
\vspace{5pt}
\begin{tabular}{@{}ccccccccc@{}}
\toprule
Scenario & $T$ & $\Delta$ & $\overline{P}$ & $\overline{K}_{\rm tot}$ & $D_T$ & $S_T^2/q(T)$ & $\widehat p_{\rm acc}^{\mathrm{pred}}(\Delta,T)$ & $\widehat{p}_{\rm acc}^{\mathrm{emp}}(\Delta, T)$  \\
\midrule
standard & 1.0 & 1.0 & 1.234 & 0.961 & 0.01 & 0.978 & 0.8058 & 0.8104 \\
standard & 2.0 & 1.0 & 1.336 & 1.349 & -0.58 & 1.000 & 0.7532 & 0.7485 \\
standard & 3.0 & 1.0 & 1.346 & 1.397 & 1.16 & 0.983 & 0.7443 & 0.7430 \\
standard & 4.0 & 1.0 & 1.340 & 1.373 & -0.19 & 0.989 & 0.7455 & 0.7465 \\
standard & 5.0 & 1.0 & 1.347 & 1.397 & 0.07 & 1.004 & 0.7442 & 0.7422 \\
standard & 6.0 & 1.0 & 1.333 & 1.363 & 0.29 & 1.016 & 0.7432 & 0.7503 \\
\midrule
strong reversion & 1.0 & 1.0 & 2.967 & 17.308 & -0.46 & 0.990 & 0.3369 & 0.3370 \\
strong reversion & 2.0 & 1.0 & 2.819 & 16.589 & -0.35 & 1.008 & 0.3543 & 0.3547 \\
strong reversion & 3.0 & 1.0 & 2.906 & 17.485 & -0.01 & 0.999 & 0.3528 & 0.3442 \\
strong reversion & 4.0 & 1.0 & 2.830 & 16.721 & 1.51 & 0.993 & 0.3542 & 0.3534 \\
strong reversion & 5.0 & 1.0 & 2.818 & 16.681 & -0.59 & 1.006 & 0.3533 & 0.3549 \\
strong reversion & 6.0 & 1.0 & 2.828 & 16.841 & 1.13 & 1.003 & 0.3520 & 0.3536 \\
\bottomrule
\end{tabular}
\end{table}

To assess the computational cost for the various components of Algorithm~\ref{alg:range_ea}, we ran a separate stationary-start timing experiment. For each scenario,
we sampled the initial value of each one-step increment from the stationary OU
distribution,
\(
X_0 \sim N\left(\theta_1/\theta_2,1/(2\theta_2)\right),
\)
and then generated one accepted increment of length $\Delta=1$ using
Algorithm~\ref{alg:range_ea}. The experiment was repeated over $R=100$ independent
replicates with $n=500$ accepted increments per replicate. Our results are displayed in Table~\ref{tab:ou:2}.

The results show that the strong-reversion case is more expensive for two
related reasons. First, the empirical acceptance probability decreases from
approximately $0.74$ in the standard case to amake pproximately $0.35$ in the
strong-reversion case, so more proposals, and hence more calls to \RBMMax,
are required per accepted increment. Second, the larger potential in the
strong-reversion case leads to substantially larger Poisson workloads: the
mean total number of Poisson points generated per accepted increment
increases from about $1.4$ to about $17$. The share of runtime spent on
interval-constrained interpolation, $\phi$-evaluation, and acceptance testing
rises from $2.3\%$ to $10.1\%$. Range reconstruction through \RBMMax~remains
the dominant cost, accounting for $94.8\%$ of the mean runtime in the
standard case and $80.5\%$ in the strong-reversion case.

Across the $R=100$ replicates, the mean wall-clock time per accepted increment was
$5.1$ ms for the standard case and $12.7$ ms for the strong-reversion case.
The largest component timing variance is associated with \RBMMax~in the
standard case and with \BridgeMax~in the strong-reversion case, followed
in the latter by \RBMMax. These large variances are consistent with
occasional low-acceptance configurations in the internal rejection
samplers, where prolonged calls can contribute disproportionately to
the replicate-level average runtimes.

\begin{table}[!ht]
\centering
\small
\caption{Timing buckets for the stationary-start one-step OU benchmark, using $n=500$ accepted increments of length $\Delta=1$ per replicate. The sample mean and sample variance are computed across $R=100$ replicate-average times per accepted increment. Percentages are $100$ times the ratio of the unrounded mean bucket time to the unrounded mean total time for the same scenario.}
\label{tab:ou:2}
\vspace{5pt}
\begin{tabular}{@{}llrrr@{}}
\toprule
Scenario & Bucket & Mean ms/draw & Variance $(\mathrm{ms}/\mathrm{draw})^2$ & \% of mean \\
\midrule
standard & endpoint proposal & 0.005 & $1.01\times 10^{-7}$ & 0.1 \\
standard & \BridgeMax & 0.131 & 0.065 & 2.6 \\
standard & \RBMMax & 4.845 & 8.417 & 94.8 \\
standard & $\phi$ bound + Poisson & 0.008 & $2.20\times 10^{-7}$ & 0.1 \\
standard & \ICB + $\phi$ + accept & 0.119 & 0.002 & 2.3 \\
standard & other & 0.001 & $3.27\times 10^{-9}$ & $<0.1$ \\
\midrule
Total &  & 5.109 &  &  \\
\midrule
strong reversion & endpoint proposal & 0.010 & $3.21\times 10^{-7}$ & 0.1 \\
strong reversion & \BridgeMax & 1.154 & 49.864 & 9.1 \\
strong reversion & \RBMMax & 10.180 & 17.027 & 80.5 \\
strong reversion & $\phi$ bound + Poisson & 0.022 & $2.42\times 10^{-6}$ & 0.2 \\
strong reversion & \ICB + $\phi$ + accept & 1.283 & 0.516 & 10.1 \\
strong reversion & other & 0.002 & $8.66\times 10^{-9}$ & $<0.1$ \\
\midrule
Total &  & 12.652 &  &  \\
\bottomrule
\end{tabular}
\end{table}


\subsection{A nonlinear double-well diffusion}
\label{sec:double-well}

We next consider the double-well Langevin diffusion
\begin{equation}
    dX_t
    =
    \gamma\left(X_t-X_t^3\right)\,dt+dW_t,
    \qquad
    X_0=x,
    \qquad
    \gamma>0.
    \label{eq:double-well-sde}
\end{equation}
The deterministic drift has stable equilibrium points at $-1$ and $1$ and
an unstable equilibrium point at $0$. The parameter $\gamma$ controls the
strength of the attraction toward the two wells. This example is useful
because its potential is unbounded in both directions, so neither EA1 nor
either one-sided version of EA2 is applicable. The drift function is
\begin{equation}
    \alpha(u)=\gamma(u-u^3)=\gamma u(1-u^2),
    \label{eq:double-well-drift}
\quad
\Rightarrow
\quad 
A(u)
    =
    \int_0^u \alpha(v)\,dv
    =
    \frac{\gamma}{4}
    -
    \frac{\gamma}{4}(u^2-1)^2 \le \frac{\gamma}{4}
\end{equation}
It is immediate that the normalizing constant defining the endpoint density $h(\cdot)$ is
finite for every $x\in\mathbb R$, $T>0$, and $\gamma>0$. The drift function is smooth and locally Lipschitz. We verify that \eqref{eq:double-well-sde} has a  nonexplosive solution and satisfies Assumption \eqref{assump:rn} in Appendix~\ref{app:a0}.

The stationary density for \eqref{eq:double-well-sde} is proportional to
\begin{equation}
    \pi_\gamma(u)
    \propto
    \exp\{2A(u)\}
    \propto
    \exp\left\{
        -\frac{\gamma}{2}(u^2-1)^2
    \right\},
    \qquad u\in\mathbb R,
    \label{eq:double-well-stationary}
\end{equation}
which makes the two-well structure explicit. An elementary closed-form transition density for \eqref{eq:double-well-sde} is not available. We have 
\begin{align}
    \Psi_\gamma(u)
    &:=
    \frac12\{\alpha^2(u)+\alpha'(u)\} = 
    \frac{\gamma^2}{2}u^2(1-u^2)^2
    +
    \frac{\gamma}{2}(1-3u^2).
    \label{eq:double-well-psi}
\end{align}
Standard calculus gives that 
$$
\Psi_\gamma(u) \ge \Psi_\gamma(\sqrt{y_+}),
\qquad
\mbox{ where we define }\quad
y_\pm = \frac{2 \pm\sqrt{1+9/\gamma}}{3}\ ,
$$
thus, we can take
$$
k_L = \Psi_\gamma(\sqrt{y_+}) = \frac{\gamma}{2}
    \left[
        \gamma y_+(1-y_+)^2+1-3y_+
    \right].
$$
and, as before, we have
\begin{equation}
\label{eq:double-well-phi-definition}
\phi(u) = \Psi_\gamma(u) - k_L =  \frac{\gamma^2}{2}
    (u^2-y_+)^2
    \bigl(u^2+2y_+-2\bigr)\ge 0\ ,
\end{equation}
since $y_+>1$. Furthermore,
\begin{equation*}
    \phi(u)
    \sim
    \frac{\gamma^2}{2}u^6,
    \qquad |u|\longrightarrow\infty.
\end{equation*}
The potential is therefore unbounded in both tails. This is a genuine
EA3 setting.

\paragraph{Sampling the proposal endpoint.}

The endpoint density under the biased Brownian proposal is
\begin{align}
    h(z)
    &\propto
    \exp\left\{
        A(z)-\frac{(z-x)^2}{2T}
    \right\}
    \propto
    \exp\left\{
        -\frac{(z-x)^2}{2T}
        -
        \frac{\gamma}{4}(z^2-1)^2
    \right\},
    \qquad z\in\mathbb R,
    \label{eq:double-well-endpoint}
\end{align}
where a multiplicative factor $\exp(\gamma/4)$ has been omitted. Equation \eqref{eq:double-well-endpoint} gives a simple exact rejection
sampler for the endpoint density $h(\cdot)$:
\begin{enumerate}
    \item Sample $\sZ\sim N(x,T)$; $\sU\sim {\rm Uniform}(0,1)$;
    \item Accept $\sZ$ if
    \begin{equation*}
    \sU \le \exp\left\{
            -\frac{\gamma}{4}
            \bigl(\sZ^2-1\bigr)^2
        \right\}
    \end{equation*}
\end{enumerate}
The acceptance probability of the endpoint sampler is
\begin{equation}
    p_h(x,T,\gamma)
    =
    \mathbb E\left[
        \exp\left\{
            -\frac{\gamma}{4}
            \bigl(\sZ^2-1\bigr)^2
        \right\}
    \right],
    \qquad \sZ\sim N(x,T).
    \label{eq:double-well-endpoint-acceptance}
\end{equation}
Although this expectation does not generally have an elementary closed
form, it is one-dimensional and can be evaluated accurately when a
theoretical benchmark for the endpoint-sampler acceptance rate is desired.

\paragraph{The range-dependent Poisson bound.}
Let $[a,b]$ be any compact interval. Differentiating
\eqref{eq:double-well-phi-definition} gives
\begin{equation*}
    \phi'(u)
    =
    \gamma u
    \left\{
        \gamma(1-u^2)(1-3u^2)-3
    \right\}.
\end{equation*}
The stationary points are therefore
\(
    0,\pm\sqrt{y_+}
\)
together with
\(
    \pm\sqrt{y_-}
\)
when $\gamma>3$. Define
\begin{equation*}
    \mathcal S_\gamma
    =
    \{0,-\sqrt{y_+},\sqrt{y_+}\}
    \cup
    \begin{cases}
        \{-\sqrt{y_-},\sqrt{y_-}\},
            & \gamma>3,\\[1mm]
        \emptyset,
            & 0<\gamma\leq3,
    \end{cases}
\end{equation*}
and
\begin{equation*}
    \mathcal C_\gamma(a,b)
    =
    \{a,b\}
    \cup
    \bigl(\mathcal S_\gamma\cap[a,b]\bigr).
\end{equation*}
Since $\phi$ is a polynomial, its exact supremum on $[a,b]$ is
\begin{equation*}
    \widehat M_\phi(a,b)
    =
    \max_{u\in\mathcal C_\gamma(a,b)}
    \phi(u).
\end{equation*}
Thus the range bound needed by Algorithm~\ref{alg:range_ea} requires only finitely many
evaluations of the explicit polynomial in
\eqref{eq:double-well-phi-definition}. No numerical optimization is required. In particular, suppose that Algorithm~\ref{alg:range_ea} reconstructs the exact Brownian
proposal range
\(
    [\underline B,\overline B].
\)
The exact Poisson height is then
\begin{equation*}
    M_\phi
    =
    \widehat M_\phi(\underline B,\overline B)
    =
    \max_{u\in
        \mathcal C_\gamma(\underline B,\overline B)}
    \phi(u).
\end{equation*}

\paragraph{Exact acceptance probability.}
The normalizing constant for the endpoint distribution is
\[
    c(x,T)
    =
    \int_{\mathbb R}
    \exp\left\{
        A(z)-\frac{(z-x)^2}{2T}
    \right\}\,dz\ .
\]
As before, the general
acceptance identity gives that the acceptance probability is
\begin{equation*}
    p_{\mathrm{acc}}(x,T)
    =
    \frac{\sqrt{2\pi T}}{c(x,T)}
    \exp\{A(x)+k_LT\}.
\end{equation*}
By \eqref{eq:double-well-drift} and
\eqref{eq:double-well-endpoint-acceptance},
\[
    c(x,T)
    =
    \sqrt{2\pi T}\,
    e^{\gamma/4}
    p_h(x,T,\gamma).
\]
Consequently,
\begin{equation*}
    p_{\mathrm{acc}}(x,T)
    =
    \frac{
        \exp\{A(x)+k_LT-\gamma/4\}
    }{
        p_h(x,T,\gamma)
    }.
\end{equation*}
This formula provides a one-dimensional numerical benchmark for the
empirical diffusion-level acceptance probability. 

\paragraph{Numerical configuration.}

For the numerical illustration, we use \(\gamma = 1, x = -1\).
Figure~\ref{fig:double-well:1} compares Algorithm~\ref{alg:range_ea} with the
Euler--Maruyama discretization. At each horizon \(T\in\{0.5,1,2,4\}\), the
Algorithm~\ref{alg:range_ea} distribution is represented by a histogram and
kernel density estimate based on \(10{,}000\) exact endpoint draws. The
sample at \(T=0.5\) was generated using a single exact transition of length
\(0.5\), whereas the samples at \(T=1,2,4\) were obtained recursively from
exact Markov transitions of length \(\Delta=0.2\), with each accepted
endpoint used as the initial value for the subsequent transition. For
comparison, \(10{,}000\) Euler--Maruyama paths were simulated over \([0,4]\)
for each discretization \(\delta=2^{-i}\), \(i=1,\ldots,5\), retaining the
values at all four horizons. Within each panel, all kernel density
estimates use the bandwidth selected from the corresponding
Algorithm~\ref{alg:range_ea} sample, so the curves use a common degree of
smoothing. The coarsest Euler schemes produce pronounced distortion and
excessive dispersion, with numerical instability becoming especially
severe at the longer horizons. At $T=4$, the curves for $\delta=1/2$ and
$\delta=1/4$ are conditional kernel density estimates based on the finite
Euler--Maruyama endpoints within the exact sample range, normalized by
the number retained. These comprise $82.79\%$ and $99.26\%$ of the paths,
respectively. For the finest discretization ($\delta=2^{-5}$), the density estimates are
nearly indistinguishable from those of the exact simulation across all
four horizons.

\begin{figure}[!ht]
    \centering
    \includegraphics[width=0.9\linewidth]{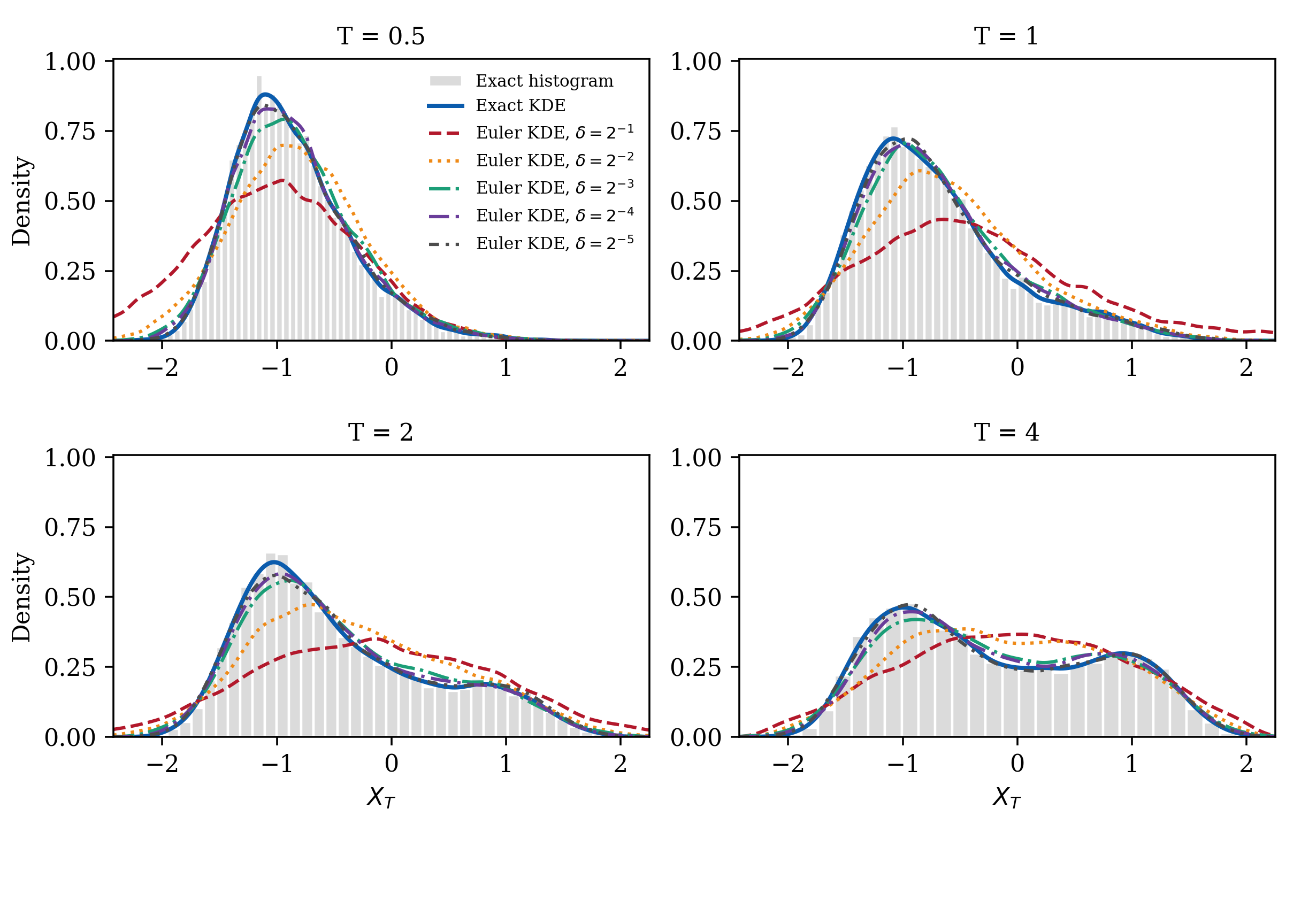}
    \caption{Terminal distributions for the double-well diffusion with $\gamma=1$ and $X_0=-1$ at $T\in\{0.5,1,2,4\}$. Each gray histogram and solid blue kernel density estimate uses $n=10{,}000$ Algorithm~\ref{alg:range_ea} endpoints. For $T>0.5$, exact paths are constructed from Markov increments of length $\Delta=0.2$. The colored curves are Euler--Maruyama kernel density estimates using $10{,}000$ paths for each $\delta=2^{-i}$, $i=1,\ldots,5$. At $T=4$, the curves for $\delta=1/2$ and $\delta=1/4$ are conditional density estimates of finite endpoints within the exact sample range, normalized by the retained counts: $8{,}279$ ($82.79\%$) and $9{,}926$ ($99.26\%$), respectively.}
\label{fig:double-well:1}
\end{figure}

Table~\ref{tab:double-well:1} displays numerical summaries for the terminal transition $X_{T-\Delta}^{(i)} \to X_T^{(i)}$, $i=1,\ldots,n$. Similarly to the OU model, we display the average number of proposals per accepted (terminal) increment $\overline P$, the mean Poisson count per accepted (terminal) increment $\overline K_{\rm tot}$, and the sample variance ratio $S^2_{\rm exact}/S^2_{{\rm EM}, \delta}$. When $T = 4$ and $\delta = 1/2, 1/4$ the Euler discretization scheme becomes unstable and produces \texttt{Inf} terminal values, and thus the variance ratio is omitted. The last two columns report the predicted and empirical acceptance probabilities for the last transition, defined as
$$
\widehat p_{\rm acc}^{\;\rm pred}(\Delta, T) = 
\left[
\frac{1}{n}\sum_{i=1}^n\frac{1}{p_{\rm acc}(X_{T-\Delta}^{(i)}, \Delta)}
\right]^{-1},
\quad
\widehat p_{\rm acc}^{\;\rm emp}(\Delta, T) = \overline P^{-1}\ .
$$
When evaluating $p_{\rm acc}(x,\Delta)$, we compute the Gaussian expectation
$p_h$ in \eqref{eq:double-well-endpoint-acceptance} using adaptive Simpson
quadrature in Python, with an explicit bound on the omitted Gaussian tails.

\begin{table}[!ht]
\centering
\scriptsize
\caption{Diagnostics for stepwise simulation of Algorithm~\ref{alg:range_ea} for the double-well diffusion with $\gamma=1$ and $X_0=-1$. For $T>0.5$, paths are generated by exact Markov increments of length $\Delta$. Each row summarizes the terminal transition $X_{T-\Delta}\to X_T$ and uses $n=10{,}000$ terminal draws.  Here $\overline{P}$ is the mean number of proposals per accepted terminal increment, $\overline{K}_{\rm tot}$ is the mean total Poisson count (accepted and rejected proposals) per accepted terminal increment. The ratio $S^2_{\rm exact}/S^2_{\rm EM,\delta}$ is the Algorithm~\ref{alg:range_ea}-to-Euler variance ratio, and $\widehat{p}^{\rm\, pred}_{\rm acc}(\Delta,T)$ and $\widehat{p}^{\,\rm emp}_{\rm acc}(\Delta,T)$ are the predicted and empirical effective acceptance probabilities for the terminal increment. A variance ratio is omitted when an Euler sample contains non-finite endpoints.}
\label{tab:double-well:1}
\vspace{5pt}
\resizebox{\textwidth}{!}{%
\begin{tabular}{@{}ccccccccccc@{}}
\toprule
$T$ & $\Delta$ & $\overline P$ & $\overline K_{\rm tot}$ & \multicolumn{5}{c}{$S_{\rm exact}^2/S_{{\rm EM},\delta}^2$} & $\widehat p_{\rm acc}^{\; \rm pred}(\Delta, T)$ & $\widehat p_{\rm acc}^{\; \rm emp}(\Delta, T)$ \\
\cmidrule(lr){5-9}
 & & & & $\delta=2^{-1}$ & $\delta=2^{-2}$ & $\delta=2^{-3}$ & $\delta=2^{-4}$ & $\delta=2^{-5}$ & & \\
\midrule
0.5 & 0.5 & 1.695 & 3.346 & 0.518 & 0.731 & 0.895 & 0.973 & 0.982 & 0.5931 & 0.5898 \\
1.0 & 0.2 & 1.280 & 0.773 & 0.214 & 0.192 & 0.959 & 0.988 & 0.967 & 0.7748 & 0.7814 \\
2.0 & 0.2 & 1.282 & 0.730 & $1.60\times 10^{-19}$ & $1.24\times 10^{-291}$ & 0.967 & 0.991 & 0.960 & 0.7726 & 0.7799 \\
4.0 & 0.2 & 1.299 & 0.760 & -- & -- & 1.003 & 1.018 & 0.981 & 0.7705 & 0.7699 \\
\bottomrule
\end{tabular}%
}
\end{table}

To assess the computational cost of Algorithm~\ref{alg:range_ea} for the
double-well diffusion, we conducted a stationary-start timing experiment
analogous to the OU timing study. With \(\gamma=1\),
the initial value of each increment was sampled independently from the
invariant density \eqref{eq:double-well-stationary} using a rejection
sampler with a standard normal proposal. We performed $R=100$ independent
replicates, each comprising \(n=500\) accepted Algorithm~\ref{alg:range_ea}
increments of length \(\Delta=0.2\). For every accepted increment, cumulative
wall-clock time over all accepted and rejected proposals was partitioned
into six computational components: endpoint proposal, \BridgeMax,
\RBMMax, evaluation of the \(\phi\)-bound and generation of the Poisson
process, interval-constrained interpolation together with evaluation of
\(\phi\) and the acceptance test, and residual operations. The mean time
per accepted increment was computed separately within each replicate;
Table~\ref{tab:double-well-algorithm1-stationary-timing} reports the mean
and sample variance of these replicate-level averages, together with each
component's mean time as a percentage of the mean total wall-clock time.

The mean total time is $5.447$ ms per accepted increment. The computational
cost is dominated by \RBMMax, which requires $5.246$ ms on average and
accounts for $96.3\%$ of the mean runtime. Interpolation, evaluation of
\(\phi\), and the acceptance test account for $1.4\%$, while
\BridgeMax~accounts for $2.0\%$; all remaining components collectively contribute less
than $1\%$. Although absolute timings depend on the implementation and
hardware, this allocation identifies restricted-meander range
reconstruction through \RBMMax~as the principal computational bottleneck
for this configuration. Its sample variance of replicate-average times,
$48.054\,(\mathrm{ms}/\mathrm{draw})^2$, is also much larger than those of
the other components. This variability is consistent with occasional
low-acceptance configurations in the internal rejection sampler, which
can require many proposals even when the diffusion-level acceptance
probability is moderate.

\begin{table}[!ht]
\centering
\small
\caption{Stationary-start timing breakdown for one-step Algorithm~\ref{alg:range_ea} simulation of the double-well diffusion with $\gamma=1.0$, $n=500$ accepted increments per replicate, and $\Delta=0.2$. Mean and sample variance per increment are computed across $R=100$ replicate averages; the final column gives each bucket's mean as a percentage of the mean total wall-clock time per accepted increment.}
\label{tab:double-well-algorithm1-stationary-timing}
\vspace{5pt}
\begin{tabular}{@{}lrrr@{}}
\toprule
Bucket & Mean ms/draw & Variance $(\mathrm{ms}/\mathrm{draw})^2$ & \% of mean \\
\midrule
endpoint proposal & 0.005 & $2.41\times 10^{-7}$ & 0.1 \\
\BridgeMax & 0.107 & 0.014 & 2.0 \\
\RBMMax & 5.246 & 48.054 & 96.3 \\
$\phi$ bound + Poisson & 0.012 & $1.34\times 10^{-6}$ & 0.2 \\
\ICB + $\phi$ + accept & 0.076 & 0.002 & 1.4 \\
other & $9.57\times 10^{-4}$ & $4.00\times 10^{-9}$ & $1.76\times 10^{-2}$ \\
\midrule
Total & 5.447 &  &  \\
\bottomrule
\end{tabular}
\end{table}

\section{Discussion}
\label{sec:disc}

This paper develops an EA3-type exact simulation method based on direct
reconstruction of the realized range of a Brownian bridge proposal. The key
observation is that the extremum decomposition used by EA2 can be continued
one step further. After decomposing at the Brownian maximum, the two
resulting restricted-meander depth processes are supplemented by their own
maxima and maximum locations. This recovers the sidewise minima and hence
the exact global range. Conditional on the resulting extrema and their
locations, all remaining randomness is carried by four independent
interval-constrained Brownian bridges. The construction therefore replaces
the layered Brownian-bridge interpolation required by the original EA3 with
a modular combination of exact extremum simulation and exact constrained
interpolation.

The realized range is the smallest closed interval containing the proposal
path. Consequently, when the Poisson height is chosen as the exact supremum
of $\phi$ over the available interval, the range-based height cannot exceed
the corresponding height obtained from any enclosing EA3 layer. This does
not, by itself, imply a reduction in total computing time. The acceptance
probability is determined by the change of measure and is unaffected by the
choice of a valid upper bound; the bound controls only the number of Poisson
points that must be generated and interpolated. Direct range reconstruction
requires two calls to the restricted-meander extremum sampler on every
proposal. The numerical experiments illustrate this tradeoff. Range
reconstruction through \RBMMax~accounts for most of the runtime in both
examples: $94.8\%$ and $80.5\%$ in the standard and strong-reversion OU
scenarios, respectively, and $96.3\%$ in the double-well example. The
larger Poisson workload in the strong-reversion OU case also raises the
share spent on interval-constrained interpolation, $\phi$-evaluation, and
acceptance testing from $2.3\%$ to $10.1\%$. The method should therefore be
viewed as a probabilistically direct alternative to the layer construction,
rather than as uniformly faster than existing exact algorithms.

The scope of the method is governed by explicit analytical and
computational conditions. After any required Lamperti transformation, the
model must satisfy the change-of-measure and lower-bound assumptions used by
the Exact Algorithm framework. The unconditional version additionally
requires an exact sampler for the endpoint density, and the potential must
admit a computable finite bound on every compact interval. These
model-specific tasks are elementary in the two examples considered here but
need not be so in general. The algorithm also relies on exact implementations
of \BridgeMax, \RBMMax, and \ICB. The correctness
result guarantees distributional exactness and almost-sure termination under
the stated assumptions; it is not a uniform finite-cost or complexity
result. For long horizons or strongly confining drifts, applying the
algorithm over shorter Markov increments may be substantially more
effective than attempting a single rejection step over the complete time
interval.

Several directions for further work follow directly from the computational
decomposition. Improvements to the restricted-meander extremum sampler
would reduce the principal cost observed in the double-well experiment.
Adaptive selection of the increment length, sequential generation of the
Poisson test, and hybrid constructions that choose between exact-range and
layer-based bounds according to the local configuration may also improve
efficiency while preserving exactness. The retained extrema and the exact
refinement property make the output suitable for diffusion-bridge
simulation, path-dependent Monte Carlo calculations, and likelihood or
data-augmentation methods based on exact diffusion skeletons. Extensions to
time-inhomogeneous scalar diffusions would require bounds for a
space--time-dependent potential, while a multidimensional analogue would
require a replacement for the ordered one-dimensional range and is
therefore a more fundamental problem. The main result here is that, in one
dimension, the full two-sided range information required for exact Poisson
thinning can be constructed directly and retained in a form that leaves the
remaining path exactly simulable.

\section{Statements and Declarations}

\paragraph{Code availability.} The code used for the numerical experiments
will be made available on GitHub at \url{https://github.com/herbei/Range_EA3}.

\bibliographystyle{plainnat}
\bibliography{references}

\appendix

\section{Verification of the assumptions for the double-well diffusion}
\label{app:a0}

Consider the double-well diffusion
\[
    dX_t=\gamma(X_t-X_t^3)\,dt+dW_t,
    \qquad X_0=x,
    \qquad \gamma>0,
\]
with drift
\[
    \alpha(u)=\gamma(u-u^3).
\]
We verify the standing assumptions used in Section~4.2.

\paragraph{Existence, uniqueness, and nonexplosion.}
The drift $\alpha$ is continuously differentiable and locally Lipschitz. The
standard localization of the Lipschitz uniqueness theorem therefore gives a
pathwise unique maximal solution up to its explosion time; see
\citep[Section 5.2]{KaratzasShreve1998}.

To verify nonexplosion, we apply the Lyapunov criterion of
\citet[Theorem~3.5]{Khasminskii2012}. The generator is
\[
    \mathcal{L}f(u)
    =
    \alpha(u)f'(u)+\frac12 f''(u).
\]
For
\[
    V(u)=1+u^2,
\]
we have $V(u)\to\infty$ as $|u|\to\infty$ and
\[
\begin{aligned}
    \mathcal{L}V(u)
    &=
    1+2u\alpha(u) 
    =
    1+2\gamma(u^2-u^4) \\
    &\le
    1+\frac{\gamma}{2}
    \le
    \left(1+\frac{\gamma}{2}\right)V(u),
\end{aligned}
\]
where $u^2-u^4\le 1/4$. The Lyapunov criterion therefore implies that
the maximal solution is nonexplosive. Pathwise uniqueness then implies
uniqueness in law; see
\citep[Section 5.3]{KaratzasShreve1998}. Hence the
double-well equation has a global, weakly unique solution.

\paragraph{Assumptions \textup{(A1)--(A3)}.}
Assumption~\eqref{assump:diff} follows from $\alpha\in C^\infty(\mathbb{R})$.
As calculated in equation~\eqref{eq:double-well-drift},
\[
    A(u)
    =
    \int_0^u\alpha(v)\,dv
    =
    \frac{\gamma}{4}
    -
    \frac{\gamma}{4}(u^2-1)^2
    \le
    \frac{\gamma}{4}.
\]
Consequently,
\[
\begin{aligned}
    c(x,T)
    &=
    \int_{\mathbb{R}}
    \exp\left\{
        A(u)-\frac{(u-x)^2}{2T}
    \right\}\,du \\
    &\le
    e^{\gamma/4}
    \int_{\mathbb{R}}
    \exp\left\{
        -\frac{(u-x)^2}{2T}
    \right\}\,du
    =
    e^{\gamma/4}\sqrt{2\pi T}
    <
    \infty,
\end{aligned}
\]
which verifies Assumption~\eqref{assump:endpoint}. Equations~\eqref{eq:double-well-psi}--\eqref{eq:double-well-phi-definition} show that
\[
    \Psi_\gamma(u)
    :=
    \frac12\{\alpha^2(u)+\alpha'(u)\}
    \ge k_L>-\infty,
\]
and therefore verify Assumption~\eqref{assump:lower}.

\paragraph{Assumption \textup{(A0)}.}
Under $\mathbb{W}^x$, define the stochastic exponential
\[
    L_t
    =
    \exp\left\{
        \int_0^t\alpha(B_s)\,dB_s
        -
        \frac12\int_0^t\alpha^2(B_s)\,ds
    \right\},
    \qquad 0\le t\le T.
\]
Since Brownian paths are bounded on compact time intervals and $\alpha$ is
continuous, $L$ is a well-defined nonnegative local martingale. Itô's formula,
as used in equation~\eqref{eq:rnqw}, gives
\[
    L_t
    =
    \exp\left\{
        A(B_t)-A(x)
        -
        \int_0^t\Psi_\gamma(B_s)\,ds
    \right\}.
\]
Using $A(u)\le\gamma/4$ and $\Psi_\gamma(u)\ge k_L$, we obtain the
deterministic bound
\[
    0<L_t
    \le
    \exp\left\{
        \frac{\gamma}{4}-A(x)+T|k_L|
    \right\},
    \qquad 0\le t\le T.
\]
Thus $L$ is a bounded, and hence uniformly integrable, martingale. By
Girsanov's theorem
\citep[Theorem~3.5.1, p.~191]{KaratzasShreve1998}, the probability
measure defined on $\mathcal{F}_T$ by
\[
    \frac{d\widetilde{\mathbb{Q}}^x}
         {d\mathbb{W}^x}
    =
    L_T
\]
makes
\[
    B_t-x-\int_0^t\alpha(B_s)\,ds,
    \qquad 0\le t\le T,
\]
a Brownian motion. Hence the coordinate process solves the double-well
equation under $\widetilde{\mathbb{Q}}^x$. By uniqueness in law,
$\widetilde{\mathbb{Q}}^x=\mathbb{Q}^x$, and therefore
\[
    \frac{d\mathbb{Q}^x}{d\mathbb{W}^x}
    =
    \exp\left\{
        \int_0^T\alpha(B_s)\,dB_s
        -
        \frac12\int_0^T\alpha^2(B_s)\,ds
    \right\}.
\]
This verifies Assumption~\eqref{assump:rn}.

The compact-localization argument described by
\citet[Section~6]{BeskosPapaspiliopoulosRoberts2008} provides an alternative
justification: on any event confining the path to a compact interval, the
drift is bounded and the stopped Girsanov formula applies. In the present
construction, that compact interval is supplied directly by the reconstructed
range $[\underline B,\overline B]$.

\end{document}